	\pgfplotsset{compat=1.7}
	\pgfplotsset{compat=newest} 
	\pgfplotsset{plot coordinates/math parser=false} 
	\newlength\figureheight 
	\newlength\figurewidth 
\newcommand{\tableref}		[1]{Table~\ref{#1}}
\newcommand{\figref}		[1]{Fig.~\ref{#1}}
\newcommand{\thmref}		[1]{Theorem~\ref{#1}}
\newcommand{\lemref}		[1]{Lemma~\ref{#1}}
\newcommand{\exref}			[1]{Example~\ref{#1}}
\newcommand{\algref}		[1]{Algorithm~\ref{#1}}
\newcommand{\secref}		[1]{Sec.~\ref{#1}}
\newcommand{\defref}		[1]{Definition~\ref{#1}}
\newcommand{\ie}{i.e., }
\newcommand{\eg}{e.g., }
\newcommand{\suchthat}{\mbox{s.t.}}
\newcommand	{\prob}	[1]	{\langle{}#1\rangle{}}
\newcommand	{\RN} 	[1]	{\textup{\uppercase\expandafter{\romannumeral#1}}}
\newcommand	{\V}	[1]	{\text{\texttt{#1}}}
\newcommand	{\PL}	[1]	{PL\textsubscript{\RN{#1}}}
\newcommand {\PLs}		{PL\textsubscript{$\Psto$}{}}
\newcommand	{\Move}			{\mathit{move}}
\newcommand	{\plfn}	[1] %
	{\ifx&#1&%
	 \gamma
	 \else
  	 \gamma\left(#1\right)
	 \fi}
\newcommand {\Eval} [1] {\mathit{Ev}\!\left(#1\right)}
\newcommand{\Eff}[1]{%
		\ifthenelse{\isempty{#1}}%
		{\mathit{Eff}}
		{\mathit{Eff}\!\!\left(#1\right)}}
\newcommand	{\Val}	[1]	{\eta\!\left(#1\right)}
\newcommand {\Proper}{\mathrm{Prop}}
\newcommand {\First}{\mathit{first}}
\newcommand {\Last}{\mathit{last}}
\newcommand {\Ts}{\mathrm{TS}}
\newcommand {\Hyp}{\mathrm{Hyp}}
\newcommand {\Uav}{\mathrm{uav}}
\newcommand {\Oas}{\mathrm{as}}
\newcommand {\Adv}{\mathrm{adv}}
\newcommand {\Mrd}{\mathrm{mrd}}
\newcommand {\Mwr}{\mathrm{mwr}}
\newcommand {\Syn}{\mathrm{syn}}
\newcommand {\Ana}{\mathrm{ana}}
\newcommand {\Write}{\mathit{write}}
\newcommand {\Read}{\mathit{read}}
\newcommand {\Activate}{\mathit{activate}}
\newcommand {\Reach}{\mathit{reach}}
\newcommand {\Hazard}{\mathit{hazard}}
\newcommand {\Locate}{\mathit{locate}}
\newcommand {\Target}{\mathit{target}}
\newcommand {\Safe}{\mathit{safe}}
\newcommand {\Probability}{\mathrm{Pr}}
\newcommand {\Psto} 	{\bigcirc}
\newcommand	{\Pone}		{\RN{1}}
\newcommand	{\Ptwo}		{\RN{2}}
\newcommand	{\GX}		{\mathcal{G}} 
\newcommand	{\M}		{\mathcal{M}} 
\newcommand	{\ZP}		{\mathbb{N}_0} 
\newcommand	{\pidx}		{\gamma} 
\newcommand	{\B}		{\mathcal{B}} 
\newcommand	{\T}		{\mathcal{T}} 
\newcommand	{\bspace}	{\Omega}
\newcommand	{\PRISM}	{{PRISM-games}}
\newcommand {\always}{\mathsf{G}}
\newcommand {\eventually}{\mathsf{F}}
\newcommand {\until}{\mathsf{U}}
\newcommand {\query}{\phi}
\newcommand {\llangle}{\langle\!\langle} 
\newcommand {\rrangle}{\rangle\!\rangle} 
\newcommand {\simulate}{\leadsto}
\newcommand {\Eq}{\!=\!}
\newcommand {\DAG}{\mathcal{G}_{\mathsf{D}}}
\newcommand {\SMG}{\mathcal{G}}
\newcommand {\HIG}{\mathcal{G}_{\mathsf{H}}}
\newcommand {\SUB}{\hat{\mathcal{G}}}
\newcommand{\Reveal}{\theta}
\newcommand{\NoAction}{\tau}
\newcommand{\Stop}{\mathit{stop}}
\newcommand{\PlusOne}{\raisebox{.2\height}{\scalebox{.8}{+}}1}
\newlength{\xywd}
\newcommand{\CC}[1]{\Comment*[f]{#1}}
\newcommand{\DD}[1]{\Comment*[r]{#1}}
\definecolor{azure(colorwheel)}{rgb}{0.0, 0.5, 1.0}
\newcommand{\RememberNode}[2]{
    \tikz[remember picture,baseline=(#1.center)]\node [inner sep=1] (#1) {$#2$};%
}
\newcommand\num[1]{%
        \ppnum=#1\relax
        \ifnum\ppnum<0
                $-$%
                \ppnum=-\ppnum
        \fi
        \let\pptemp\empty
        \loop\ifnum\ppnum>999
                \count255=\ppnum
                \divide\ppnum by1000
                \count255=\numexpr \count255 - 1000*\ppnum \relax
                \edef\pptemp{,\ifnum\count255<100 0\ifnum\count255<10 0\fi\fi
                             \the\count255 \pptemp}%
        \repeat
        \the\ppnum
        \pptemp
}
\begin{document}

\mainmatter 

\title{%
	Security-Aware Synthesis Using\\%
	Delayed-Action Games%
	\thanks{%
		This work was supported by the NSF CNS-1652544 grant,
		as well as the ONR N00014-17-1-2012 and N00014-17-1-2504,
		and AFOSR FA9550-19-1-0169 awards.}%
	}
\titlerunning{Security-Aware Synthesis Using Delayed-Action Games}
%
\author{%
	Mahmoud Elfar%
		\orcidID{0000-0002-5579-1255}
	\and
	Yu Wang%
		\orcidID{0000-0002-0431-1039}
	\and
	Miroslav Pajic%
		\orcidID{0000-0002-5357-0117}
	}
\authorrunning{M. Elfar et al.}
\institute{%
	Duke University, Durham NC 27708, USA\\	
	\email{\{mahmoud.elfar,yu.wang094,miroslav.pajic\}@duke.edu}
}

\maketitle 


\begin{abstract}
Stochastic multiplayer games (SMGs) have gained attention in the field of strategy synthesis for multi-agent reactive systems.
However, standard SMGs are limited to modeling systems
 where all agents have full knowledge of the state of the game.
In this paper, we introduce delayed-action games (DAGs) formalism that simulates 
 hidden-information games (HIGs) as SMGs, where hidden information is captured by delaying
 a player's actions.
The elimination of private variables enables the usage of SMG off-the-shelf model checkers to implement HIGs.
Furthermore, we demonstrate how a DAG can be decomposed into subgames that can be independently explored, utilizing parallel computation to reduce the model checking time, while alleviating the state space explosion problem that SMGs are notorious for.
In addition, we propose a DAG-based framework for strategy synthesis and analysis.
Finally, we demonstrate applicability of the DAG-based synthesis framework on a case study of a human-on-the-loop unmanned-aerial vehicle system 
under stealthy attacks,
where the proposed framework is used to formally model, analyze and synthesize security-aware strategies for the system.
\end{abstract}

\renewcommand{\xrightarrow}[2][]{%
  \sbox{0}{$\scriptstyle#2$}%
  \xywd=\wd0%
  \!\!\!\xymatrix@C\dimexpr\xywd+0.6em\relax{{}\ar[r]^{\!\!#2}&{\!\!\!}\!}\!\!%
}
\renewcommand{\inference}[4][]{#2  & \implies &  #3 %
	\ifthenelse{\isempty{#4}}%
		{}
		{ \; \mbox{s.t.} \; #4}}
\setlength{\textfloatsep}{1.0\baselineskip plus 0.2\baselineskip minus 0.5\baselineskip}
\setlength{\abovecaptionskip}{3pt plus 0pt minus 7pt}
\setlength{\abovedisplayskip}{5pt}
\setlength{\belowdisplayskip}{5pt}
\makeatletter
\renewcommand\subsubsection{\@startsection{subsubsection}{3}{\z@}%
	{-12\p@ \@plus -4\p@ \@minus -4\p@}%
	{-0.5em \@plus -0.22em \@minus -0.1em}%
	{\normalfont\normalsize\bfseries\boldmath}} \makeatother          


\section{Introduction}
\label{sec:intro}

Stochastic multiplayer games (SMGs) 
are used to model reactive systems where
nondeterministic decisions are made by multiple players%
	~\cite{brazdil2018strategy,fremont2018reactive,neider2016automaton}.
SMGs extend probabilistic automata
by assigning a player
to each choice to be made in the game. This extension enables modeling of
complex systems where the behavior of players is unknown at
design time.
The \emph{strategy synthesis} problem aims to find a \emph{winning strategy},
\ie a~strategy that guarantees that a set of objectives (or winning conditions) is~satisfied~\cite{chen2013automatic,li2014synthesis}.
Algorithms for synthesis include, for instance, 
value iteration and strategy iteration techniques, 
where multiple reward-based objectives are satisfied~\cite{basset2015strategy,chen2013synthesis,kelmendi2018value}.
To tackle the state-space explosion problem,
\cite{wiltsche2015assume} presents an \emph{assume-guarantee} 
synthesis framework that relies on synthesizing strategies on the component
level first, 
before composing them into a global winning strategy.
Mean-payoffs and ratio rewards are further investigated
in~\cite{basset2017compositional} to synthesize $\varepsilon$-optimal
strategies.
Formal tools that support strategy synthesis via SMGs include 
 \PRISM{}~\cite{chen2013prism,kwiatkowska2018prism} and
 {Uppaal Stratego}~\cite{david2015uppaal}.

SMGs are classified based on the number of players that can make
choices at each state.
In \emph{concurrent} games, more than one player is allowed to concurrently make choices at a given state.
Conversely, \emph{turn-based} games assign one player at most to each state.
Another classification considers the information available to
different players across the game~\cite{rasmusen1994games}.
\emph{Complete-information} games
(also known as
\emph{perfect-information} games~\cite{chatterjee2005semiperfect})
grant all players complete access to the information within
the game.
In \emph{symmetric} games, some information is equally
hidden from all players. 
On the contrary, \emph{asymmetric} games allow some players to have access
to more information 
than the others~\cite{rasmusen1994games}.

This work is motivated by security-aware systems in which stealthy adversarial actions are potentially hidden from the system, 
where the latter can probabilistically and intermittently gain full knowledge about 
the current state. 
While hidden-information games (HIGs) can be used to model such systems by using private variables to capture hidden information~\cite{chatterjee2005semiperfect},
standard model checkers can only synthesize strategies for (full-information) SMGs;
thus, demanding for alternative representations.
The equivalence between turn-based semi-perfect information games and concurrent perfect-information games was shown~\cite{chatterjee2005semiperfect}.
Since a player's strategy mainly rely on
full knowledge of~the game state~\cite{chen2013synthesis}, using SMGs for synthesis
produces strategies that 
may violate synthesis specifications in cases where required information is hidden from the player.
\emph{Partially-observable} stochastic games (POSGs) allow agents to have different belief states by incorporating uncertainty about both the current state
 and adversarial plans \cite{hansen2004dynamic}.
Techniques such as active sensing for online replanning \cite{fu2015integrating} and grid-based abstractions of belief spaces \cite{norman2015verification}
 were proposed to mitigate synthesis complexity arising from partial observability.
The notion of \emph{delaying actions} has been studied as means for gaining information about a game
 to improve future strategies~\cite{klein2015much,zimmermann2016delay}, but was not deployed as means for hiding information.

To this end,
we introduce delayed-action games (DAGs) --- a new class of games that simulate HIGs,
where information is hidden from one player by delaying the actions of the others.
The omission of private variables enables the use of off-the-shelf tools 
to implement and analyze DAG-based models.
We show how DAGs (under some mild and practical assumptions) can be decomposed into
subgames that can be independently explored,
reducing the time required for synthesis
 by employing parallel computation.
Moreover, we propose a DAG-based framework 
for strategy synthesis and analysis of security-aware systems.
Finally, we demonstrate the framework's applicability through a case study of security-aware planning for an unmanned-aerial vehicle (UAV) 
system prone to stealthy cyber attacks, where we develop a DAG-based system model and further synthesize strategies
 with strong probabilistic security~guarantees.

The paper is organized as follows.
\secref{sec:games} presents SMGs, HIGs, and 
problem formulation.
In \secref{sec:delayed}, we introduce DAGs 
and show that they can simulate HIGs.
\secref{sec:properties} 
proposes a DAG-based synthesis framework, which we use~for security-aware planning for UAVs in \secref{sec:casestudy},
before concluding the paper in \secref{sec:conclusion}.

\section{Stochastic Games}
\label{sec:games}

In this section, we present turn-based stochastic games, which assume that all players have full information about the game state.
We then introduce hidden-information games and their private-variable~semantics. 

\subsubsection{Notation.}

We use $\ZP$ to denote the set of non-negative integers.
$\mathcal{P}(A)$ denotes the powerset of A (i.e., $2^A$).
A variable $v$ has a set of valuations $\Eval{v}$, where $\Val{v} \in \Eval{v}$ denotes one.
We use $\Sigma^*$ to denote the set of all finite
 words over alphabet $\Sigma$, including the empty word~$\epsilon$.
The mapping
$\Eff{} \! : \! \Sigma^* \! \times \! \Eval{v} \! \rightarrow \! \Eval{v}$
indicates the effect of a finite word on $\Val{v}$.
Finally, for general indexing, we use $s_i$ or $s^{(i)}$, for $i\in \ZP$, while 
$\mbox{PL}_\gamma$ denotes~\emph{Player~$\gamma$}.

\subsubsection{Turn-Based Stochastic Games (SMGs).}

SMGs can be used to model reactive systems that undergo both stochastic and nondeterministic transitions from one state to another.
In a \emph{turn-based} game,%
\footnote{%
	The term \emph{turn-based} indicates that
	at any state only one player can play an action.
	It does not necessarily imply that players take fair turns.}
actions can be taken at any state by at most one player.
Formally, an SMG can be defined as follows~\cite{baier2007stochastic,svorevnova2016quantitative,wiltsche2015assume}.
\begin{definition} [Turn-Based Stochastic Game]
A \emph{turn-based game (SMG)}
 with players $\Gamma = \{ \Pone, \Ptwo, \Psto \}$ is a tuple 
 $\SMG = \langle S, \allowbreak (S_\Pone, S_\Ptwo, S_\Psto), \allowbreak
 A,s_0,\delta \rangle$, where
\begin{itemize}
\item
	$S$ is a finite set of states, partitioned into $S_\Pone$, $S_\Ptwo$ and $S_\Psto$;
\item
	$A \Eq A_\Pone \cup A_\Ptwo \cup \{ \NoAction \} $ is a finite set of actions
	where $\NoAction$ is an empty action;
\item
	$s_0 \in S_\Ptwo$ is the initial state; and
\item
	$\delta: S \times A \times S \rightarrow [0,1]$ is a transition function,
	such that
	$\delta(s,a,s') \in \{1,0\}$, $\forall s \in S_\Pone \cup S_\Ptwo, a\in A $ and
	$s' \in S$,
	and $\delta(s,\NoAction,s') \in \left[0,1\right],~\forall s \in S_\Psto$ and
	$s' \in S_\Pone \cup S_\Ptwo$,
	where 
	$\sum_{s' \in S_\Pone \cup S_\Ptwo}{\delta(s,\NoAction,s')} = 1$ holds.
\end{itemize}
\end{definition}
For all $s \!\in S_\Pone \cup S_\Ptwo$ and $a \in A_\Pone \cup A_\Ptwo $, we write 
 {\larger[-1]$s \xrightarrow{a} s'$}
 if $\delta(s,a,s') \Eq 1$.
Similarly, for all $s \!\in\! S_{\Psto}$  we write 
 {\larger[-1]$s \xrightarrow{p} s'$}
if $s^\prime$ is randomly sampled with probability $p \Eq \delta (s,\NoAction,s')$.

\subsubsection{Hidden-Information Games.}

SMGs assume that all players 
 have full knowledge of the current state, and hence provide \emph{perfect-information} models~\cite{chatterjee2005semiperfect}.
In many applications, however, this assumption may not hold. A great example are security-aware
models where stealthy adversarial actions can be hidden from the system; \eg the system may not even be aware that it is under attack.
On the other hand, \emph{hidden-information} games (HIGs) refer to games
 where one player does not have complete access to (or knowledge of) the current state.
The notion of hidden information can be formalized with the use of \emph{private variables} (PVs)~\cite{chatterjee2005semiperfect}.
Specifically, a game state can be encoded using variables
 $v_\T$ and $v_\B$, representing 
 the true information, which is only known to \PL{1}, 
 and \PL{2} belief,~respectively.

\begin{definition} [Hidden-Information Game] %
\label{def:hig}
A \emph{hidden-information stochastic game (HIG)} with players $\Gamma = \{ \Pone, \Ptwo, \Psto \}$
 over a set of variables $V = \left\lbrace v_\T , v_\B \right\rbrace $
 is a tuple 
 $\HIG = \langle S, \allowbreak (S_\Pone, \allowbreak S_\Ptwo, S_\Psto), \allowbreak
 A,s_0, \beta, \delta \rangle$, where
\begin{itemize}
\item 
	set of states $S \subseteq \Eval{v_\T} \hspace{-2pt} \times \hspace{-2pt}\Eval{v_\B} \hspace{-2pt}\times\hspace{-2pt} \mathcal{P} \left(\Eval{v_\T}\right)\hspace{-2pt}  \times\hspace{-2pt} \Gamma$, partitioned in $ S_\Pone,  S_\Ptwo$, $ S_\Psto$;
\item 
	$A \Eq A_\Pone \cup A_\Ptwo \cup \{ \NoAction, \Reveal \} $ is a finite set of actions,
	where $\NoAction$ denotes an empty action, and $\Reveal$ is the action capturing \PL{2} attempt to reveal the true value~$v_\T$;
\item
	$s_0 \in S_\Ptwo$ is the initial state;
\item $\beta \colon A_\Ptwo \!\rightarrow\! \mathcal{P}\! \left(A_\Pone \right)$ is a function that defines
 the set of available \PL{1} actions, based on \PL{2}~action;~and
\item $ \delta \colon  S \times A \times  S \rightarrow [0,1] $ is a transition function such that 
		$ \delta ( s_\Pone,a, s_\Psto)=$
		$ \delta ( s_\Psto,a, s_\Pone) = 0$,
		and $\delta (s_\Ptwo,\Reveal, s_\Psto)$,
		$ \delta ( s_\Ptwo,a, s_\Pone)$,
		$ \delta ( s_\Pone,a, s_\Ptwo) \in \left\lbrace 0, 1 \right\rbrace$ for all
		$ s_\Pone \in  S_\Pone$, $ s_\Ptwo \in  S_\Ptwo$,
		$ s_\Psto \in  S_\Psto$ and $a \in A$,
		where $\sum_{s^\prime \in S_\Ptwo}{\delta(s_\Psto,\NoAction,s')} \Eq 1$.
\end{itemize}
\end{definition} %

In the above definition, $\delta$ only allows transitions  $s_\Pone$ to $s_\Ptwo$,
	$s_\Ptwo$ to $s_\Pone$ or $s_\Psto$, with $s_\Ptwo$  to  $s_\Psto$ conditioned by action $\Reveal$, 	
	and probabilistic transitions $s_\Psto$ to $s_\Ptwo$.
A game state can be written as $s = \left(t, u, \bspace, \pidx \right)$, but to simplify notation we use
 $s_{\pidx} \left(t, u, \bspace \right)$ instead, where
 $t \in \Eval{v_\T}$ is the \emph{true} value of the game,
 $u \in \Eval{v_\B}$ is \PL{2} current \emph{belief},
 $\bspace \in \mathcal{P}({\Eval{v_\T}}) \setminus \{ \emptyset \} $ is \PL{2} \emph{belief space},
 and $\pidx \in \Gamma$ is the current player's index.
When the truth is hidden from \PL{2}, the belief space $\Omega$ is the \emph{information set}~\cite{rasmusen1994games}, 
capturing \PL{2} knowledge about the possible~true~values.

\begin{wrapfigure}[8]{r}{0.452\textwidth}
	\centering
	\vspace{-2.2em}
	\includegraphics[width=0.44\columnwidth]{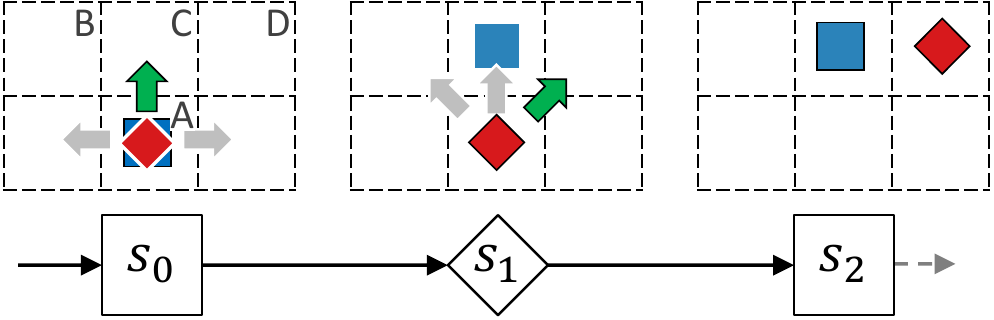}	
	\caption{The UAV belief (solid square) vs. the true value (solid diamond) of its location.}	
	\label{fig:player_roles}
\end{wrapfigure}

\vspace{\topsep} 
\noindent\textit{Example 1 (Belief vs. True Value).} \refstepcounter{example}\label{ex:running_example}
Our motivating example is a system that consists of a UAV and a human operator.
For localization, the UAV mainly relies on a GPS sensor that can be compromised to effectively steer the UAV away from its original path. 
While aggressive attacks can be detected,
 some may remain stealthy by introducing only bounded errors
 at each step~\cite{lesi_tecs17,pajic_csm17,mo2016performance,jovanov_tac19}.
For example, \figref{fig:player_roles} shows a UAV 
(\PL{2}) occupying zone \textsf{A} and flying north (N). 
An adversary (\PL{1}) can launch a stealthy attack targeting its GPS, 
 introducing a bounded error (NE, NW)
 to remain stealthy.
The set of stealthy actions available to the attacker depends on the preceding UAV
 action, which is captured by the function $\beta$,
 where~$\beta(\mathsf{N}) \Eq \lbrace \mathsf{NE}, \mathsf{N}, \mathsf{NW}\rbrace$.
Being unaware of the attack, the UAV believes that it is entering zone \textsf{C},~while the true new location is
 \textsf{D} due to the attack ($\mathsf{NE}$).
Initially, $\Val{v_{\T}} \Eq \Val{v_{\B}} \Eq z_A$,~and $\Omega \Eq \{ z_A \}$
 as the UAV is certain  it is in zone $z_A$.
In $s_2$, $\Val{v_{\B}} \Eq z_C$, yet $\Val{v_{\T}} \Eq z_D$.
Although $v_{\T}$ is hidden, 
 \PL{2} is aware that $\Val{v_{\T}}$ is in $\Omega \Eq \{ z_B, z_C, z_D \}$.
\vspace{\topsep} 
 
\noindent\textbf{HIG Semantics.} $\HIG$ semantics is described using the rules shown in \figref{fig:hig_rules},
where $\mathsf{H2}$ and $\mathsf{H3}$ capture \PL{2} and \PL{1} moves, respectively. 
The rule $\mathsf{H4}$ specifies that a \PL2 attempt $\Reveal$ to reveal the true value can succeed with probability $p_i$ where \PL{2} belief is updated
(\ie $u^\prime \!\Eq\! t$), and remains unchanged otherwise.

\begin{figure}[!t]
\centering
\scalebox{1}{\vbox{\input{tikz/FIG_HigSemantics}}}
\caption{%
	Semantic rules for an HIG.}
	\label{fig:hig_rules}
\end{figure}

\begin{example} [HIG Semantics]
\label{ex:hig_scenario}
Continuing~\exref{ex:running_example}, let us assume that the set of actions 
 $A_\Pone = A_\Ptwo =
	\lbrace \mathsf{N},\mathsf{S},\mathsf{E},\mathsf{W},\allowbreak
	\mathsf{NE},\mathsf{NW},\mathsf{SE},\mathsf{SW} \rbrace $, and that
 $\Reveal \Eq \mathsf{GT}$ is a geolocation task that attempts to reveal the true value of the game.%
 \footnote{A geolocation task is an attempt to localize the UAV by examining its camera feed.}
Now, consider the scenario illustrated in~\figref{fig:pv_game}.
At the initial state $s_0$, the UAV attempts to move north ($\mathsf{N}$),
progressing the game to the state $s_1$, where the adversary takes her turn by selecting
 an action from the set 
 $\beta(\mathsf{N}) = \lbrace \mathsf{NE}, \mathsf{N}, \mathsf{NW}\rbrace$. 
The players take turns until the UAV performs a geolocation task $\mathsf{GT}$, moving from the state $s_4$ to $s_5$.
With probability $p = \delta(s_5,\NoAction,s_6)$, the UAV detects its true~location and updates its belief accordingly (i.e., to $s_6$).
Otherwise, the belief remains the same~(i.e., equal to $s_4$).
\end{example}

\begin{figure}[tb]
	\centering  
	\includegraphics[width=1.0\textwidth]{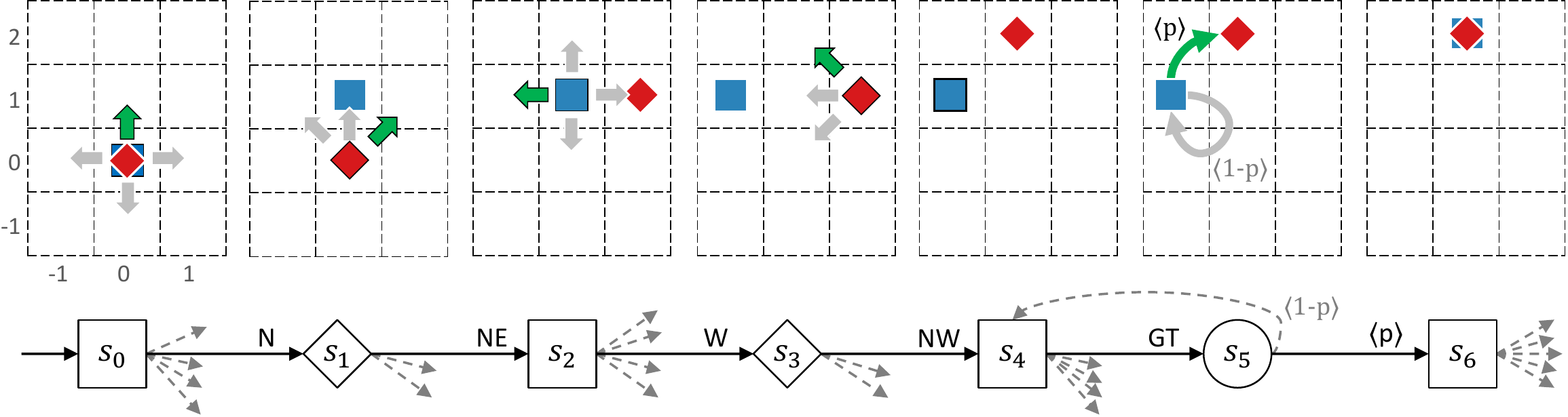}  
	\caption{%
		An example of the UAV motion in a 2D-grid map, modeled as an HIG.
		Solid squares represent the UAV belief,
		while solid diamonds represent the ground truth.
		The UAV action $\mathsf{GT}$ denotes performing a geolocation task.
		}
	\label{fig:pv_game}
\end{figure}

\subsubsection{Problem Formulation.}

Following the system described in \exref{ex:hig_scenario},
we now consider the composed HIG $\HIG = \M_{\Adv} \| \M_{\Uav} \| \M_{\Oas}$ 
 shown in \figref{fig:hig_system_model}; the HIG-based model incorporates standard models of a UAV ($\M_{\Uav}$),
 an adversary ($\M_{\Adv}$), and
 a geolocation-task advisory system ($\M_{\Oas}$) (\eg as introduced in~\cite{elfar2019security,feng2016synthesis}).
Here, the probability of a successful detection $p(v_\T,v_\B)$
 is a function of both the location
 the UAV believes to be its current location ($v_\B$) as well as the
 ground truth location that the UAV actually occupies ($v_\T$).
Reasoning about the flight plan using such model becomes problematic since the ground truth $v_\T$
 is inherently unknown to the UAV (i.e.,~\PL{2}), and thus so is $p(v_\T,v_\B)$.
Furthermore, such representation, where some information is hidden, is not supported by off-the-shelf
 SMG model checkers.
Consequently, for such HIGs, \emph{our goal is to find an alternative representation that is suitable for
strategy synthesis using off-the-shelf SMG~model-checkers.}

\begin{figure}[!t]
	\centering
	\includegraphics[scale=0.76]{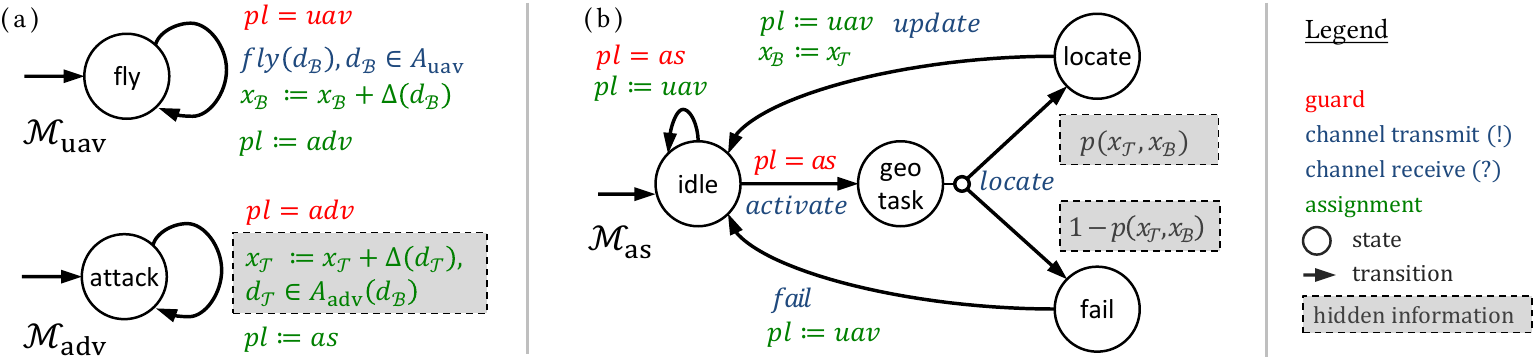}
	\caption{An example of an HIG-based system model comprised of the UAV ($\M_{\Uav}$),
 			 the adversary ($\M_{\Adv}$), and the AS ($\M_{\Oas}$).
 			 Framed information is hidden from the UAV-AS. 
 			 }
	\label{fig:hig_system_model}
\end{figure}

\section{Delayed-Action Games}
\label{sec:delayed}

In this section, we provide an alternative representation of HIGs that eliminates the use of private variables --- we introduce Delayed-Action Games (DAGs) 
 that exploit the notion of \emph{delayed actions}.
Furthermore, we  show that for any HIG, a DAG that simulates the former can be constructed.

\subsubsection{Delayed Actions.}

Informally, a DAG reconstructs an HIG such that actions of \PL{1} (the player with access to perfect information) follow 
the actions of \PL{2}, \ie \PL{1} actions are \emph{delayed}.
This rearrangement of the players' actions provides a means to hide information from \PL{2} without the use of private variables, since in this case, at \PL{2} states, \PL{1} actions have not occurred yet.
In this way, \PL{2} can act as though she has complete information at the moment she makes
 her decision, as the future state has not yet happened and so cannot be known.
In essence, the formalism can be seen as a partial ordering of the players' actions,
 exploiting the (partial) superposition property that a wide class of physical systems exhibit.
To demonstrate this notion, let us consider DAG modeling on our running example.

\begin{example} [Delaying Actions]
\label{ex:da_game}
\figref{fig:da_game} depicts the (HIG-based) scenario  from \figref{fig:pv_game}, but in the corresponding DAG,
where the UAV actions are performed~first (in $\hat{s}_0,\hat{s}_1,\hat{s}_2$), 
followed by the adversary delayed actions (in $\hat{s}_3,\hat{s}_4$).
Note that, in the DAG model, at the time the UAV executed its actions ($\hat{s}_0,\hat{s}_1,\hat{s}_2$) the adversary actions had not occurred (yet).
Moreover, $\hat{s}_0$ and $\hat{s}_6$ (\figref{fig:da_game}) share the same
belief and true values as $s_0$ and $s_6$ (\figref{fig:pv_game}), respectively, though
 the transient states do not exactly match. This will be used to show the relationship between the~games.
\end{example}

\begin{figure}[!t]
	\centering 
	\includegraphics[width=1.0\textwidth]{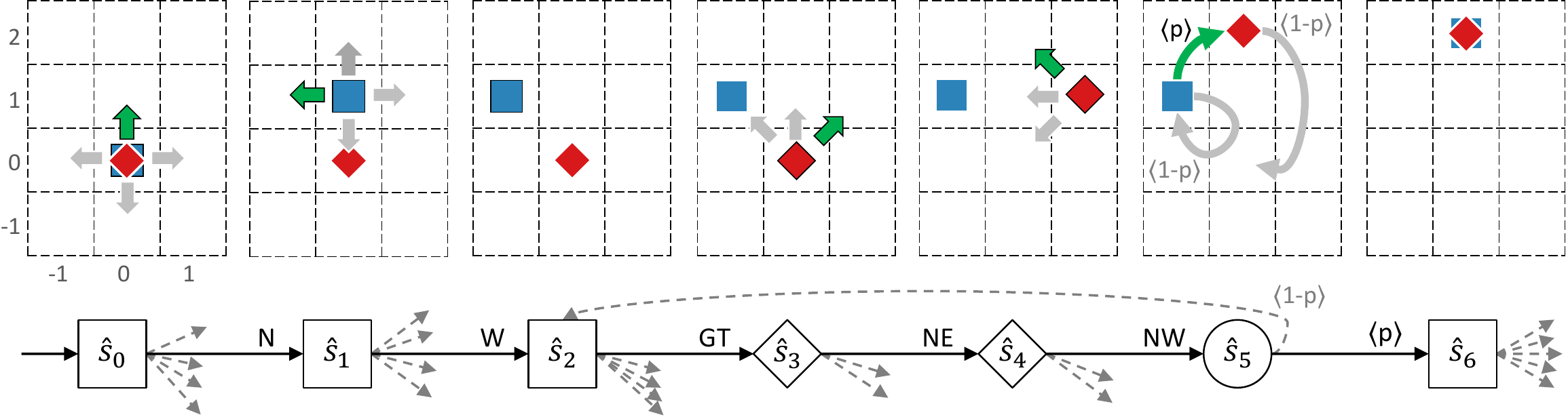}  
	\caption{%
	The same scenario as in~\figref{fig:pv_game}, modeled as a DAG.
	Solid squares represent UAV belief,
		while solid diamonds represent the ground truth.
		The UAV action $\mathsf{GT}$ denotes performing a geolocation task.}
	\label{fig:da_game}
\end{figure}

The advantage of this approach is twofold.
First, the elimination of private variables enables simulation of an HIG using
 a full-information game.
Thus, the formulation of the strategy synthesis problem using off-the-shelf SMG-based tools becomes feasible.
In particular, a \PL{2} synthesized strategy becomes dependent on the knowledge of \PL{1}
 behavior (possible actions), rather than the specific (hidden) actions.
We  formalize a DAG as follows.
\begin{definition}[Delayed-Action Game]\label{def:dag}
A \emph{DAG} of an HIG 
$\HIG = \langle S, (S_\Pone, \allowbreak S_\Ptwo, S_\Psto),  A,s_0, \beta, \delta \rangle$,
with players $\Gamma = \{ \Pone, \Ptwo, \Psto \}$
over a set of variables $V \Eq \left\lbrace v_\T , v_\B \right\rbrace $ is a tuple 
$\DAG = \langle \hat{S},\allowbreak(\hat{S}_\Pone, \hat{S}_\Ptwo,
	\hat{S}_\Psto), \allowbreak A, \hat{s}_0, \beta, \hat{\delta} \rangle $
where
\begin{itemize}
\item $\hat{S} \subseteq \Eval{v_\T} \times \Eval{v_\B} \times A_{\Ptwo}^* \times \ZP \times \Gamma$ is the set of states, partitioned into $\hat{S}_\Pone, \hat{S}_\Ptwo$ and $\hat{S}_\Psto$;
\item $\hat{s}_0 \in \hat{S}_\Ptwo$ is the initial state; and
\item $\hat{\delta} \colon \hat{S} \times A \times \hat{S} \rightarrow [0,1] $ is a transition function such that 
		$\hat{\delta} (\hat{s}_\Ptwo,a,\hat{s}_\Psto) \Eq $
		$\hat{\delta} (\hat{s}_\Pone,a,\hat{s}_\Ptwo) \Eq $
		$\hat{\delta} (\hat{s}_\Psto,a,\hat{s}_\Pone) \Eq 0$, and
		$\hat{\delta} (\hat{s}_\Ptwo,a,\hat{s}_\Ptwo) \in \left\lbrace 0, 1 \right\rbrace$,
		$\hat{\delta} (\hat{s}_\Ptwo,\Reveal,\hat{s}_\Pone) \in \left\lbrace 0, 1 \right\rbrace$,
		$\hat{\delta} (\hat{s}_\Pone,a,\hat{s}_\Pone) \in \left\lbrace 0, 1 \right\rbrace$,
		$\hat{\delta} (\hat{s}_\Pone,a,\hat{s}_\Psto) \in \left\lbrace 0, 1 \right\rbrace$,
		for all $\hat{s}_\Pone \in \hat{S}_\Pone$,
		$\hat{s}_\Ptwo \in \hat{S}_\Ptwo$, $\hat{s}_\Psto \in \hat{S}_\Psto$ and $a \in A$,
		where $\sum_{\hat{s}^\prime \in \hat{S}_\Ptwo}{\delta(\hat{s}_\Psto,a,\hat{s}')} \Eq 1$.
\end{itemize}
\end{definition}
Note that, in contrast to transition function $\delta$ in HIG $\HIG$, $\hat{\delta}$ in DAG $\DAG$ only allows transitions  
	$\hat{s}_\Ptwo$ to $\hat{s}_\Ptwo$ or $\hat{s}_\Pone$, as well as 
	$\hat{s}_\Pone$ to $\hat{s}_\Pone$ or $\hat{s}_\Psto$,
	and probabilistic transitions $\hat{s}_\Psto$ to $\hat{s}_\Ptwo$; 
	also note that 
 $\hat{s}_\Ptwo$  to  $\hat{s}_\Pone$ is conditioned by the action $\Reveal$.

\subsubsection{DAG Semantics.}
A DAG state is a tuple $\hat{s} \Eq \left(\hat{t}, \hat{u}, w, j, \pidx \right)$, which for simplicity we
 shorthand as $\hat{s}_\pidx \left(\hat{t}, \hat{u}, w, j \right)$, 
 where $\hat{t} \in \Eval{v_\T}$ is the last known true value,
 $\hat{u} \in \Eval{v_\B}$ is \PL{2} belief,
 $w \in A_\Ptwo^*$ captures \PL{2} actions taken since the last known true value,
 $ j \in \ZP $ is an index on $w$,
 and $\pidx \in \Gamma $ is the current player index.
The game transitions are defined using the semantic rules from~\figref{fig:dag_rules}.
Note that \PL{2} can execute multiple moves (\ie actions) before executing $\Reveal$ to attempt to reveal the true value ($\mathsf{D2}$), moving to a \PL{1} state where \PL{1} executes  all her delayed actions before reaching a `revealing' state $\hat{s}_\Psto$
($\mathsf{D3}$).
Finally, the revealing attempt can succeed with probability $p_i$ where \PL{2} belief is updated
 (\ie $\hat{u}^\prime \Eq \hat{t}\,$), or otherwise remains~unchanged~($\mathsf{D4}$).

\begin{figure}[!t]
\centering
\scalebox{0.98}{\vbox{\input{tikz/FIG_DagSemantics}}}
\caption{%
	Semantic rules for DAGs. }
	\label{fig:dag_rules}
\end{figure}

In both $\HIG$ and $\DAG$, we label states where all players have full knowledge of the current state
as \emph{proper}. We also say that two states are similar if they agree on the belief, and equivalent if they agree on both the belief and ground truth.

\begin{definition}[States]
Let $s_\gamma (t,u,\Omega)\in S$ and $\hat{s}_{\hat{\gamma}} (\hat{t}, \hat{u}, w,j) \in \hat{S}$. We say:
\begin{itemize}
\item $s_\gamma$ is \emph{proper} iff $\bspace = \{t\}$, denoted by $s_\gamma \in \Proper(\HIG)$.
\item $\hat{s}_{\hat{\gamma}}$ is \emph{proper} iff $w = \epsilon$, denoted by $\hat{s}_{\hat{\gamma}} \in \Proper(\DAG)$.
\item $s_\gamma$ and $\hat{s}_{\hat{\gamma}}$ are \emph{similar} iff $\hat{u} = u$, $\hat{t} \in \Omega$, and $\gamma=\hat{\gamma}$, denoted by $s _\gamma\sim \hat{s}_{\hat{\gamma}}$.
\item $s_\gamma$, $\hat{s}_{\hat{\gamma}}$ are \emph{equivalent} iff $t = \hat{t}$, $u = \hat{u}$, $w=\epsilon$,
 and $\gamma=\hat{\gamma}$, denoted by $s_\gamma \simeq \hat{s}_{\hat{\gamma}}$.
\end{itemize}
\end{definition}
From the above definition, we have that $s \simeq \hat{s} \implies s \in \Proper(\HIG), \hat{s} \in \Proper(\DAG)$. 
We now define \emph{execution fragments}, possible progressions from a state~to~another.

\begin{definition}[Execution Fragment]
An \emph{execution fragment} (of either an SMG, DAG or HIG) is a finite sequence of states, actions and probabilities\\
$\varrho = s_0 a_1 p_1 s_1 a_2 p_2 s_2 \ldots a_n p_n s_n $
such that
 {\larger[-1.5]$
	( s_i \xrightarrow{a_{i+1}}s_{i+1}        )
	\lor
	( s_i \xrightarrow{\prob{p_{i+1}}}s_{i+1} ),
	\forall i \geq 0
 $}.\footnote{For deterministic transitions, $p=1$, hence omitted from $\varrho$ for readability.}
\end{definition}
We use $\First(\varrho)$ and $\Last(\varrho)$
 to refer to the first and last states of $\varrho$, respectively.
If both states are proper, we say that $\varrho$ is \emph{proper} as well,
 denoted by $\varrho \in \Proper(\HIG)$.%
 \footnote{%
 An execution fragment lives in the transition system (TS), \ie
 $\varrho \in \Proper(\mathrm{TS}(\GX))$. We omit $\mathrm{TS}$ for readability.}
Moreover, $\varrho$ is \emph{deterministic} if no probabilities appear in the sequence.

\begin{definition}[Move]
A \emph{move} $m_\gamma$ of an execution $\varrho$ from state $s\in\varrho$,
  denoted by $\Move_\gamma(s,\varrho)$,
  is a sequence of actions $a_1 a_2 \ldots a_i \in A_\gamma^*$
  that player $\gamma$ performs in $\varrho$ starting from $s$.
\end{definition}
By omitting the player index
 we refer to the moves of all players.
To simplify notation, we use $\Move(\varrho)$ as a short notation for 
 $\Move(\First(\varrho),\varrho)$.
We write $(m)(\First(\varrho)) = \Last(\varrho)$ to denote that
the execution of move $m$ from the $\First(\varrho)$ leads to the $\Last(\varrho)$.
This allows us to now define the \emph{delay operator} as follows.
\begin{definition}[Delay Operator]\label{def:delay_operator}
For an $\HIG$, let $m = \Move(\varrho) = a_1 b_1 \ldots a_n b_n \Reveal$ be a move
 for some deterministic $\varrho \in \Ts(\HIG) $, where
 $a_1 ... a_n \in A_\Ptwo^*, b_1 ... b_n \in A_\Pone^*$.
The delay operator, denoted by
 $\overline{m}$, is defined by the rule
 $\overline{m} = a_1 \ldots a_n \Reveal b_1 \ldots b_n$.
\end{definition}
Intuitively, the delay operator shifts \PL{1} actions to the right of \PL{2} actions up until the next probabilistic state. For example,
\begin{align*}
\begin{matrix}
\mbox{if} &\rho =
	&s_{\Ptwo}^{(0)}
	&\!\xrightarrow{a_1}     \! &s_{\Pone}^{(1)}
	&\!\xrightarrow{b_2}     \! &s_{\Ptwo}^{(2)} 
	&\!\xrightarrow{\Reveal} \! &s_{\Psto}^{(3)}
	&\!\xrightarrow{p_3}     \! &s_{\Ptwo}^{(4)}
	&\!\xrightarrow{a_4}     \! &s_{\Pone}^{(5)}
	&\!\xrightarrow{b_5}     \! &s_{\Ptwo}^{(6)}
	&\!\xrightarrow{a_6}     \! &s_{\Pone}^{(7)}
	&\!\xrightarrow{b_7}     \! &s_{\Ptwo}^{(8)}
	\\
\mbox{then } & m =
	& 
	&a_1 &
	&\RememberNode{12}{b_2} &
	&\RememberNode{13}{\Reveal} &
	&\NoAction &
	&a_4 &
	&\RememberNode{15}{b_5} &
	&\RememberNode{16}{a_6} & 
	&b_7, &	\\
\mbox{and} &\overline{m} =
	& 
	&a_1 &
	&\RememberNode{23}{\Reveal} &
	&\RememberNode{22}{b_2} &
	&\NoAction &
	&a_4 &
	&\RememberNode{26}{a_6} &
	&\RememberNode{25}{b_5} & 
	&b_7. &
\end{matrix}
\end{align*}

\begin{tikzpicture}[overlay,remember picture]
	\draw [-latex,shorten >=2pt,shorten <=2pt] (12) -- (22);
	\draw [-latex,shorten >=2pt,shorten <=2pt] (13) -- (23);
	\draw [-latex,shorten >=2pt,shorten <=2pt] (15) -- (25);
	\draw [-latex,shorten >=2pt,shorten <=2pt] (16) -- (26);
\end{tikzpicture}

\subsubsection{Simulation Relation.}
Given an HIG $\HIG$, we first define the corresponding DAG $\DAG$.
\begin{definition}[Correspondence]
Given an HIG $\HIG$, a corresponding DAG $\DAG = \mathfrak{D}[\HIG]$ is a DAG that follows the semantic rules displayed in~\figref{fig:hig_to_dag}.
\end{definition}

\begin{figure}[!t]
\centering
\scalebox{0.98}{\vbox{\input{tikz/FIG_Mapping}}}
\caption{%
	Semantic rules for HIG-to-DAG transformation.}
	\label{fig:hig_to_dag}
\end{figure}

For the rest of this section, we consider $\DAG = \mathfrak{D} [\HIG]$, and use
$\varrho \in \Ts(\HIG)$ and $\hat{\varrho} \in \Ts(\DAG)$ to denote two execution fragments of the HIG and DAG, respectively.
We say that $\varrho$ and $\hat{\varrho}$ are \emph{similar},
  denoted by $\varrho \sim \hat{\varrho}$,
  iff
	$\First(\varrho) \simeq \First(\hat{\varrho})$,
	$\Last(\varrho) \sim \Last(\hat{\varrho})$, and 
	$\overline{\Move(\varrho)} = \Move(\hat{\varrho})$.

\begin{definition}[Game Proper Simulation]
\label{def:game_proper_simulation}
 A game $\DAG$ properly simulates $\HIG$,
  denoted by $\DAG \simulate \HIG$,
  iff
  $\forall\varrho \in \Proper(\HIG)$,
  $\exists \hat{\varrho} \in \Proper(\DAG)$
  such that $\varrho \sim \hat{\varrho}$.
\end{definition}
Before proving the existence of the simulation relation, we first show that if a move is executed on two equivalent states, then the terminal states are similar.

\begin{lemma}[Terminal States Similarity]
\label{lem:terminal_states} %
For any $s_0 \!\simeq\! \hat{s}_0$
and~a~deterministic $\varrho \!\in\! \Ts(\HIG)$
where $\First(\varrho) \Eq s_0$, $\Last(\varrho) \in S_{\Ptwo}$,
 then
 $\Last(\varrho) \!\sim\! \left(\overline{\Move(\varrho)}\right)\! \left(\hat{s}_0\right)$ holds.
\end{lemma}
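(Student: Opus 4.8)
The plan is to prove the claim by following the belief and the ground truth \emph{separately} along the two executions, exploiting the fact that the delay operator only reorders the moves. First I would pin down the shape of a deterministic HIG execution that starts at the proper state $s_0$ and ends in $S_\Ptwo$: by rules $\mathsf{H2}$ and $\mathsf{H3}$ it is an alternation of a \PL{2} action $a_i \in A_\Ptwo$ and a \PL{1} action $b_i \in \beta(a_i)$ (capped, where applicable, by a $\Reveal$ and its realized reset), so that $\Move(\varrho) = a_1 b_1 \cdots a_n b_n$ and $\overline{\Move(\varrho)} = a_1 \cdots a_n\, \Reveal\, b_1 \cdots b_n$. The base case $n = 0$ is immediate: $\overline{\epsilon} = \epsilon$ gives $(\overline{\Move(\varrho)})(\hat{s}_0) = \hat{s}_0$, and $s_0 \simeq \hat{s}_0$ already yields $s_0 \sim \hat{s}_0$, since both states are proper (hence $\Omega_0 = \{t_0\} \ni \hat{t}_0$) while the beliefs and player indices agree by equivalence.

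The key structural fact---superposition---is that in the HIG the belief is advanced only by \PL{2} actions and the truth only by \PL{1} actions: $\mathsf{H2}$ sets $u' = \Eff{a_i,u}$ and leaves $t$ fixed, whereas $\mathsf{H3}$ sets $t' = \Eff{b_i,t}$ and leaves $u$ fixed. A routine induction on $n$ then gives the closed forms $u_f = \Eff{a_1 \cdots a_n, u_0}$, $t_f = \Eff{b_1 \cdots b_n, t_0}$, and $\Omega_f = \{\, \Eff{c_1 \cdots c_n, t_0} \mid c_i \in \beta(a_i) \,\}$, the belief space being refreshed only at \PL{2} steps through $\beta$. I would then trace $\overline{\Move(\varrho)}$ through the DAG: the \PL{2} phase ($\mathsf{D2}$, applied $n$ times) records $w = a_1 \cdots a_n$ and drives the belief to $\Eff{a_1 \cdots a_n, u_0}$ while holding $\hat{t}$ fixed; the $\Reveal$ moves to a \PL{1} state with $j = 0$; and the \PL{1} phase ($\mathsf{D3}$) replays the delayed actions, the index $j$ ensuring that the $i$-th one satisfies $b_i \in \beta(w_{i-1}) = \beta(a_i)$ and driving $\hat{t}$ to $\Eff{b_1 \cdots b_n, t_0}$, before the realized reset returns the game to a \PL{2} state.

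Similarity then follows by checking its three clauses directly: the beliefs coincide, $\hat{u}_f = u_f = \Eff{a_1 \cdots a_n, u_0}$; the DAG truth lies in the HIG belief space, $\hat{t}_f = t_f = \Eff{b_1 \cdots b_n, t_0} \in \Omega_f$, precisely because the realized tuple $(b_1, \dots, b_n)$ is one admissible element of the product $\prod_i \beta(a_i)$ that generates $\Omega_f$; and both endpoints carry the player index $\Ptwo$.

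I expect the main obstacle to be the bookkeeping that aligns the DAG's index-driven replay with the HIG's interleaved one, together with the return-to-\PL{2} (reset) step. One must argue that the word $w$ recorded during the \PL{2} phase is read back in the same order in which the $\beta$-constraints were imposed in the HIG---so that the switch from the incrementing branch ($j < |w| - 1$) to the terminating branch ($j = |w| - 1$) of $\mathsf{D3}$ falls exactly on the last delayed action---and that the reset preserves the belief $u_f$ rather than overwriting it with the truth. This last point is exactly where the gap between \emph{similar} and \emph{equivalent} lives: the lemma claims only $\sim$, and the proof must resist over-claiming $\simeq$ outside the degenerate proper case.
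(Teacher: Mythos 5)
Your core argument coincides with the paper's own proof: an induction that tracks belief and truth separately, using the facts that only \PL{2} actions advance the belief, only \PL{1} actions advance the truth, and that the realized $b_i \in \beta(a_i)$ makes the replayed truth one admissible member of the product set generating $\Omega_f$. Packaging this as closed forms ($u_f = \Eff{a_1\cdots a_n,u_0}$, $t_f = \Eff{b_1\cdots b_n,t_0}$) rather than a step-by-step invariant is a cosmetic difference, and your observation that in fact $\hat{t}_f = t_f$ (stronger than the needed $\hat{t}_f \in \Omega_f$) is correct.

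The gap is at the endpoint. The similarity check must be made at the states reached immediately after the last delayed action $b_n$, i.e., at the pre-reset $\Psto$-states of both games --- this is exactly where the paper's induction closes, with $\gamma_i = \hat{\gamma}_i = \Psto$. You instead carry both executions through ``the realized reset'' and check similarity at $\Ptwo$-states carrying belief $u_f$ and truth $t_f$. No such DAG state exists: rule $\mathsf{D4}$ is a probabilistic transition (so including it also contradicts the hypothesis that $\varrho$ is deterministic; the reset is the business of Theorem~\ref{thm:prob_bisim}, not of this lemma), and neither of its branches produces your endpoint. The success branch overwrites the belief, $\hat{u}' \Eq \hat{t}$, so the belief is no longer $u_f$; the failure branch keeps the belief but rolls the truth back, $\hat{t}' \Eq \hat{t}_0 \Eq t_0$, and keeps $w' \Eq w \neq \epsilon$, so the truth is no longer $t_f$ and the state is not even proper. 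Worse, on the failure branch similarity can genuinely fail, since $t_0 \in \Omega_f$ need not hold (e.g., when no $\beta(a_i)$ contains a ``stay'' action, every element of $\Omega_f$ is displaced from $t_0$). Your closing remark that ``the reset preserves the belief rather than overwriting it with the truth'' conflates these two branches; there is no single deterministic reset with that property. The repair is simply to stop both executions at the $\Psto$ states (move ending in $\Reveal$), where all three clauses of similarity --- equal beliefs, $\hat{t}_f \Eq t_f \in \Omega_f$, and matching indices --- hold by your own computation.
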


\begin{proof}
Let $\Last(\varrho_i) = s^{(i)}_{\gamma_i} (t_i,u_i,\bspace_i)$ and 
 $\left(\overline{\Move(\varrho_i)}\right) \left(\hat{s}_0\right) = \hat{s}^{(i)}_{\hat{\gamma}_i} (\hat{t}_i,\hat{u}_i,w_i,j_i)$,
where {$\Move(\varrho_i) = a_1 b_1 ... a_i b_i \Reveal $}.
We then write  $\overline{\Move(\varrho)}=a_1 ... a_i \Reveal b_1 ... b_i$.
We use induction over $i$ as follows:
\begin{itemize}
\item Base $(i \Eq 0)$:
	$\varrho_0 \Eq s_0 $
	$\implies s^{(0)} \simeq \hat{s}^{(0)} $ where $u_0=\hat{u}_0$
	and $t_0 \Eq \hat{t}_0$.
\item Induction $(i \!>\! 0)$: Assume that the claim holds for {$\Move(\varrho_{i-1}) \Eq a_1 b_1 ... a_{i-1} b_{i-1} \Reveal $},
\ie $u_{i-1} \Eq \hat{u}_{i-1}$ and $\hat{t}_{i-1} \!\in\! \Omega_{i-1}$. 
For $\varrho_i$ we have that
	$u_i \Eq \Eff{a_i,u_{i-1}}$ and
	$\hat{u}_i \Eq \Eff{a_i,\hat{u}_{i-1}}$.
Also, $t_i \Eq \Eff{b_i,t_{i-1}} \in \Omega_{i}$ and
	$\hat{t}_i \Eq \Eff{b_i,\hat{t}_{i-1}} $.
Hence, $ u_i \Eq \hat{u}_i$, $\hat{t}_i \in \Omega_i$ and
$\hat{\gamma}_i \Eq \gamma_i \Eq \Psto$. Thus, $s^{(i)} \sim \hat{s}^{(i)}$ holds. The same can be shown for
$\Move(\varrho) = a_1 b_1 ... a_i b_i $ where no $\Reveal$ occurs. \qed
\end{itemize}
\end{proof}

\begin{theorem}[Probabilistic Simulation] 
\label{thm:prob_bisim}
For any $s_0 \simeq \hat{s}_0$ 
and $\varrho \in \Proper(\HIG)$ where $\First(\varrho) = s_0$,
it holds that
 $$
 	\mathrm{Pr} \left[ \Last(\varrho) = s^{\prime} \right] = 
  	\mathrm{Pr} \left[ \left(\overline{\Move(\varrho)}\right) \left(\hat{s}_{0}\right) = \hat{s}^{\prime} \right]
 \quad
\forall s^\prime,\hat{s}^\prime ~~s.t.~~ s^\prime \simeq \hat{s}^\prime.
 $$
\end{theorem}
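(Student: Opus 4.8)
The plan is to reduce the probabilistic claim to the deterministic result of \lemref{lem:terminal_states} by slicing a proper execution at its stochastic (reveal) transitions and arguing segment by segment. Any $\varrho \in \Proper(\HIG)$ with $\First(\varrho) = s_0$ factors as a concatenation of deterministic fragments, each ending in a $\Reveal$, separated by the probabilistic transitions of $\Msto$; since the delay operator acts within a segment and leaves the inter-segment $\NoAction$ steps untouched (as in the worked example preceding \defref{def:delay_operator}), $\overline{\Move(\varrho)}$ factors along the same boundaries. I would therefore run an induction on the number $k$ of $\Reveal$ actions in $\Move(\varrho)$, with the base case $k = 0$ being trivial: $\Last(\varrho) = s_0 \simeq \hat{s}_0$ deterministically, so both sides equal $1$ for the matched pair and $0$ otherwise.

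The heart of the argument is a single segment. Let the first segment carry the move $m = a_1 b_1 \ldots a_n b_n \Reveal$ with $\overline{m} = a_1 \ldots a_n \Reveal b_1 \ldots b_n$, and let $s_\Psto(t,u,\bspace)$ and $\hat{s}_\Psto(\hat{t},\hat{u},w,j)$ be the stochastic states reached in $\HIG$ and $\DAG$ just before the reveal resolves. Applying \lemref{lem:terminal_states} to the deterministic prefix yields $u = \hat{u}$ and $\hat{t} \in \bspace$. I would then strengthen this to the equality $t = \hat{t}$: in the $\DAG$ the truth coordinate is untouched by the $a_i$ and by $\Reveal$ (rule $\mathsf{D2}$) and is transformed only by the delayed $b_i$ through $\Eff{b_i,\cdot}$ (rule $\mathsf{D3}$), so starting from $\hat{t}_0 = t_0$ it evolves exactly as the HIG truth $t = \Eff{b_n \ldots b_1, t_0}$ under the \emph{same} Player~I action sequence that $\overline{m}$ preserves. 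Because the reveal success probability is the function of the current true value and belief, the equalities $t = \hat{t}$ and $u = \hat{u}$ force identical probabilities $p_i$ in both games --- precisely the probability that the correspondence rules of \figref{fig:hig_to_dag} attach to the matched stochastic edges.

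It remains to propagate the relation across the resolved reveal. On success (rules $\mathsf{H4}$/$\mathsf{D4}$, probability $p_i$) the HIG moves to $s_\Ptwo(t,t,\{t\})$ and the DAG to $\hat{s}_\Ptwo(\hat{t},\hat{t},\epsilon,0)$; since $t = \hat{t}$ these are \emph{equivalent} ($\simeq$), so the induction hypothesis applies to the suffix of $\varrho$ and to the corresponding suffix of the DAG run. The probability of reaching a fixed proper $s'$ is then the product of the per-segment success probabilities, identical in the two games term by term; matching the branchings and invoking $s' \simeq \hat{s}'$ yields the claimed equality of $\mathrm{Pr}[\Last(\varrho) = s']$ and $\mathrm{Pr}[(\overline{\Move(\varrho)})(\hat{s}_0) = \hat{s}']$.

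I expect the main obstacle to be exactly the upgrade from the \emph{similarity} delivered by \lemref{lem:terminal_states} to the \emph{equality} $t = \hat{t}$ at each stochastic state, since the whole probability-matching hinges on the two reveal probabilities coinciding rather than merely the beliefs agreeing. This in turn rests on a careful accounting that the delay operator permutes but neither drops nor duplicates the Player~I actions within a segment, and that the index/word bookkeeping $(w,j)$ --- reset to $(\epsilon,0)$ on success by $\mathsf{D4}$ --- keeps successive segments aligned with their HIG counterparts; checking that $\beta(w_j)$ in the DAG ranges over the same action set as $\beta(a_i)$ in the HIG is the routine but essential step underpinning this alignment.
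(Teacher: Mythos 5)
Your proposal follows essentially the same route as the paper's proof: factor the proper execution into deterministic fragments separated by the stochastic reveal transitions, induct on the number of such fragments, apply \lemref{lem:terminal_states} within each fragment, and conclude that the matched reveal probabilities coincide because they are functions of the (equal) truth--belief pairs, with equivalence ($\simeq$) restored after each successful reveal so the induction can continue. The only place you go beyond the paper's text is in explicitly justifying the upgrade from the similarity $\hat{t} \in \Omega$ given by \lemref{lem:terminal_states} to the equality $t = \hat{t}$ needed for the probabilities to match --- the paper simply asserts $\hat{t}_1 = t_1$ ``from \lemref{lem:terminal_states},'' leaving that step implicit in the lemma's own inductive argument, so your extra care is a refinement rather than a different approach.
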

\begin{proof}
We can rewrite $\varrho$ as
$
	\varrho = \varrho_0 \xrightarrow{p_1} \varrho_1 \cdots \varrho_{n-1} \xrightarrow{p_n} s_{\Ptwo}^{(n)}
$, 
 where $\varrho_0, \varrho_1,\ldots, \varrho_{n-1}$ are deterministic.
 Let $\First(\varrho_i) = s_\Ptwo^{(i)} (t_i,u_i,\Omega_i)$,
 $\Last(\varrho_i) = s_\Psto^{(i)} (t_i^\prime,u_i^\prime,\Omega_i^\prime)$,
 and $\left(\overline{\Move(\varrho)}\right) \! \left(\hat{s}_0 \right) 
 				\Eq \hat{s}^{(n)} (\hat{t}_n, \hat{u}_n, w_n, j_n)$.
We use induction over $n$ as follows:
\begin{itemize}
\item Base ($n \Eq 0$): for $\varrho$ to be deterministic and proper, $\varrho \Eq \varrho_0 \Eq s^{(0)} $ holds.
\item Case ($n \Eq 1$): $p_1 = p(t_0^\prime,u_0^\prime) $.
	From \lemref{lem:terminal_states}, $\hat{u}_1 \Eq u_1 $ and $\hat{t}_1 \Eq t_1 $.
	Hence,
	{\larger[-1]$ \mathrm{Pr} \left[ \Last(\varrho) \Eq s_\Ptwo^{(1)} \right] = 
  	\mathrm{Pr} \left[ \left(\overline{\Move(\varrho)}\right) \left(\hat{s}_{0}\right) \Eq \hat{s}_\Ptwo^{(1)} \right] \Eq p(t_0^\prime,u_0^\prime) $}
  	and $s_\Ptwo^{(1)} \simeq \hat{s}_\Ptwo^{(1)}$.
\item Induction ($n \!>\! 1$): 
	It is straightforward to infer that $ p_n \Eq p\left(t_{n-1}^\prime , u_{n-1}^\prime \right) $,~hence 
 	{$ \mathrm{Pr} \! \left[ \Last(\varrho) \Eq s_\Ptwo^{(n)} \right] = 
  	\mathrm{Pr} \! \left[ \left(\overline{\Move(\varrho)}\right) \! \left(\hat{s}^{(0)}\right) \Eq \hat{s}^{(n)} \right] = P$},
  	and  $s_\Ptwo^{(n)} \simeq \hat{s}_\Ptwo^{(n)}$.
	 \qed
\end{itemize}
\end{proof}
Note that in case of multiple $\Reveal$ attempts, the above probability $P$ satisfies
$$
P = \prod_{i=1}^{n} \sum_{j=1}^{m_i} p_i\left(t_{i-1}^\prime , u_{i-1}^\prime\right)
	\left( 1- p_{i-1}\left(t_{i-1}^\prime , u_{i-1}^\prime \right) \right)^{(j-1)},
$$
where $m_i$ is the number of $\Reveal$ attempts at stage $i$.
Finally, since \thmref{thm:prob_bisim} imposes no constraints on $\Move(\varrho)$,
a DAG can simulate all proper executions that exist in the corresponding HIG.

\begin{theorem}[DAG-HIG Simulation]
\label{thm:pv_da_bisim}
For any HIG $\HIG$
there exists a DAG $\DAG = \mathfrak{D}[\HIG]$ such that $\DAG \simulate \HIG $ (as defined in Def.~\ref{def:game_proper_simulation}).
\end{theorem}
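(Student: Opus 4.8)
The plan is to verify the three clauses of the execution similarity $\varrho \sim \hat{\varrho}$ directly, delegating the hardest (terminal) clause to \thmref{thm:prob_bisim}. First I would dispatch the existence claim: given $\HIG$, the correspondence $\DAG = \mathfrak{D}[\HIG]$ is obtained by applying the transformation rules of \figref{fig:hig_to_dag}, whose initial rule forces $\hat{t}_0 \Eq t_0$, $\hat{u}_0 \Eq u_0$ and $w_0 \Eq \epsilon$, so that $s_0 \simeq \hat{s}_0$. More generally, since every proper HIG state satisfies $\bspace \Eq \{t\}$, each such state admits an equivalent proper DAG state, namely one with $w \Eq \epsilon$, $\hat{t} \Eq t$, $\hat{u} \Eq u$.

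Second, fix an arbitrary $\varrho \in \Proper(\HIG)$ and put $s_0 \coloneqq \First(\varrho)$, which is proper by definition of a proper execution. I would then pick the equivalent proper DAG state $\hat{s}_0 \simeq s_0$ supplied by the previous step and let $\hat{\varrho}$ be the DAG execution that starts at $\hat{s}_0$ and realizes the delayed move $\overline{\Move(\varrho)}$ (\defref{def:delay_operator}), taking the same probabilistic outcome at each revealing stage as $\varrho$. Two of the three clauses are then immediate from this construction, since $\First(\hat{\varrho}) \Eq \hat{s}_0 \simeq \First(\varrho)$ and $\Move(\hat{\varrho}) \Eq \overline{\Move(\varrho)}$.

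Third, the terminal clause follows from \thmref{thm:prob_bisim} applied to this very pair $s_0 \simeq \hat{s}_0$: it yields $\Last(\varrho) \simeq \Last(\hat{\varrho})$ with matching probability. By the earlier remark that equivalence implies properness of both states, $\Last(\hat{\varrho})$ is proper, hence $\hat{\varrho} \in \Proper(\DAG)$; and since the proper HIG state $\Last(\varrho)$ has $\bspace \Eq \{t\}$ while equivalence gives $\hat{t} \Eq t$, the equivalence here collapses to similarity, \ie $\Last(\varrho) \sim \Last(\hat{\varrho})$. All three clauses of $\varrho \sim \hat{\varrho}$ thus hold; as $\varrho$ was arbitrary, $\DAG \simulate \HIG$.

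I expect the real work to be the bookkeeping for executions with several $\Reveal$ attempts, where one must argue that $\overline{\Move(\varrho)}$ is a legal DAG run across every stage --- in particular, that each delayed \PL{1} action $b_i$ still respects its $\beta(w_j)$ constraint after the reordering imposed by the delay operator, and that it is the successful branch of rule $\mathsf{D4}$ that resets $w$ to $\epsilon$ at each proper checkpoint. Establishing this is what lets the per-stage equivalences of \thmref{thm:prob_bisim} (encoded in its product formula for $P$) compose into a single proper $\hat{\varrho}$, which I regard as the crux of the argument rather than the final stitching, which is essentially a corollary.
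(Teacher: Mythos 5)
Your proposal is correct and takes essentially the same route as the paper: the paper likewise obtains \thmref{thm:pv_da_bisim} by building $\DAG = \mathfrak{D}[\HIG]$ from the rules of \figref{fig:hig_to_dag} and observing that, since \thmref{thm:prob_bisim} (resting on \lemref{lem:terminal_states}) imposes no constraints on $\Move(\varrho)$, every proper HIG execution is simulated by the DAG execution realizing its delayed move. The $\beta(w_j)$-legality bookkeeping you flag as the crux is exactly what the paper leaves implicit in the induction of \lemref{lem:terminal_states} and the correspondence rules, so nothing essential is missing from your argument relative to the paper's own.
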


\section{Properties of DAG and DAG-based Synthesis }
\label{sec:properties}

We here discuss DAG features,
 including how it can be decomposed into subgames by restricting the simulation to finite executions,
 and the preservation of safety properties,
 before proposing a DAG-based synthesis framework.

\subsubsection{Transitions.}

In DAGs, nondeterministic actions of different players underline different semantics.
Specifically, \PL{1} nondeterminism captures what is known about the adversarial behavior,
 rather than exact actions, where \PL{1} actions are constrained by the earlier \PL{2} action.
Conversely, \PL{2} nondeterminism~abstracts the player's decisions.
This distinction reflects how DAGs can be used~for strategy synthesis under hidden information.
To illustrate this, suppose that a strategy $\pi_\Ptwo$ is to be obtained based
 on a worst-case scenario.
In that case,
 the game is explored for all possible adversarial behaviors.
Yet, if a strategy~$\pi_\Pone$ is known about \PL{1},
  a counter strategy $\pi_\Ptwo$ can be found by constructing  $\DAG^{\pi_\Pone}$. 

Probabilistic behaviors in DAGs are captured by \PLs,
which is characterized by the transition function
 $\hat{\delta} \colon \hat{S}_\Psto \times \hat{S}_\Ptwo \rightarrow [0,1]$.
The specific definition of $\hat{\delta}$ depends on the modeled system. 
For instance, if the transition function (i.e., the probability) is state-independent,
\ie $ \hat{\delta}(\hat{s}_\Psto, \hat{s}_\Ptwo) = c , c \in [0,1]$,
 the obtained model becomes trivial.
Yet, with a state-dependent transition function,
\ie $\hat{\delta}(\hat{s}_\Psto,\hat{s}_\Ptwo) = p(\hat{t},\hat{u})$,
 the probability that \PL{2} successfully reveals the true value depends on both the belief and the true value,
and the transition function can then be realized since $\hat{s}_\Psto$ holds both $\hat{t}$ and $\hat{u}$.

\subsubsection{Decomposition.}

Consider an execution $\hat{\varrho}^* \Eq \hat{s}_0 a_1 \hat{s}_1 a_2 \hat{s}_2 \ldots$
 that describes a scenario where \PL{2} performs infinitely many actions with no attempt to reveal the true value. 
To simulate $\hat{\varrho}^*$, the word $w$ needs to infinitely grow.
Since we are interested in finite executions,
 we impose \emph{stopping criteria} on the DAG, such that the game is \emph{trapped} whenever $|w| \Eq h_{\max}$ is true,
 where $h_{\max} \in \mathbb{N}$ is an \emph{upper horizon}.
We formalize the stopping criteria as a deterministic finite automaton (DFA) that, when composed with the DAG, traps the game
 whenever the stopping criteria hold.
Note that imposing an upper horizon by itself is not a sufficient criterion for a DAG to be considered a stopping game~\cite{chen2013stochastic}.
Conversely, consider a proper (and hence finite) execution
 $\hat{\varrho} \Eq \hat{s}_0 a_1 \ldots \hat{s}^\prime$,
 where $\hat{s}_0, \hat{s}^\prime \in \Proper(\DAG)$.
From \defref{def:game_proper_simulation}, it follows that a DAG initial state is strictly proper,
\ie $\hat{s}_0 \in \Proper(\DAG)$.
Hence, when $\hat{s}^\prime$ is reached, the game can be seen as if it is \emph{repeated} with a new initial state
 $\hat{s}^\prime$.
Consequently, a DAG game (complemented with stopping criteria)  can be decomposed into a (possibly infinite) countable
set of \emph{subgames} that have the same structure yet different initial states.

\begin{definition}[DAG Subgames]\label{def:dag_subgames}
The \emph{subgames} of a $\DAG$ are defined by the set
{\larger[-1]$
 \left\lbrace
	\SUB_i \;\middle|\;
	\SUB_i =
	\left\langle \hat{S}^{(i)},\allowbreak (\hat{S}_\Pone^{(i)},
	\hat{S}_\Ptwo^{(i)}, \hat{S}_\Psto^{(i)}),\allowbreak
	A,\hat{s}_0^{(i)}, \hat{\delta}^{(i)} \right\rangle ,\,
	i \in \ZP
  \right\rbrace,
$}
where
$ \hat{S} = \bigcup_{i}^{}{\hat{S}^{(i)}} $;
$ \hat{S}_{\pidx} = \bigcup_{i}^{}{\hat{S}_{\pidx}^{(i)}}
		\;  
		\forall \pidx \in \Gamma $; and
$ \hat{s}_0^{(i)} = \hat{s}_{\Ptwo}^{(i)}
		\; \suchthat \;
		\hat{s}_{\Ptwo}^{(i)} \in \Proper(\DAG^{(i)})
		\; , \;
		\hat{s}_{\Ptwo}^{(i)} \neq \hat{s}_{\Ptwo}^{(j)}
		\;
		\forall i,j \in \ZP
		$.
\end{definition}

Intuitively, each subgame either reaches a proper state (representing the initial state of another subgame) or terminates by an upper horizon.
This decomposition allows for the independent (and parallel) analysis of individual subgames,
 drastically reducing both the time required for synthesis and the explored state space, and hence improving scalability.
An example of this decompositional approach is provided in~\secref{sec:casestudy}.

\subsubsection{Preservation of safety properties.}

In DAGs, the action $\Reveal$ denotes a transition from \PL{2} to \PL{1} states and thus the execution of any delayed actions. 
While this action can simply describe a revealing attempt, it can also serve as a \emph{what-if} analysis of how the true value 
 may evolve at stage $i$ of a subgame.
We refer to an execution of the second type as a \emph{hypothetical branch}, where $\Hyp(\hat{\varrho},h)$ denotes the set of hypothetical branches from $\hat{\varrho}$
 at stage $h \in \{ 1, \ldots , n \} $. 
Let $L_{\mathrm{safe}} (s)$
 be a labeling function denoting if a state is safe.
The formula
 $ \Phi_{\mathrm{safe}} \coloneqq \left[ \always \, \Safe \right]$ is satisfied by an execution
 $\varrho$ in HIG iff all $ s(t,u,\Omega) \in \varrho$ are safe. 

Now, consider $\hat{\varrho}$ of the DAG, with $\hat{\varrho} \sim \varrho$. We identify the following three~cases: 
\\(a)~$L_{\mathrm{safe}} (s)$ depends only on the belief $u$, then $\varrho \models \Phi_{\mathrm{safe}}$ iff all $\hat{s}_\Ptwo \in \hat{\varrho}$ are~safe;
\\(b)~$L_{\mathrm{safe}} (s)$ depends only on the true value $t$, then $\varrho \models \Phi_{\mathrm{safe}}$ iff all $\hat{s}_\Pone \in \Hyp(\hat{\varrho},n)$ are safe; and
\\(c)~$L_{\mathrm{safe}} (s)$ depends on both the true value $t$ and belief $u$, then $\varrho \models \Phi_{\mathrm{safe}}$~iff~$\Last(\hat{\varrho}_h)$ is safe for all $ \hat{\varrho}_h \!\in\! \Hyp(\hat{\varrho},h), h \!\in\! \{ 1, ... , n \},$
where $n$ is the number of~\PL{2}~actions.
\\Taking into account such relations,
 both safety (\eg never encounter a hazard) and distance-based requirements (\eg never exceed a subgame horizon) can be specified when using DAGs for synthesis,
 to ensure their satisfaction in the original model.
This can be generalized to other reward-based synthesis objectives, which will be part of our future efforts that we discuss in~\secref{sec:conclusion}.

\subsubsection{Synthesis Framework.}

We here propose a framework for strategy synthesis using DAGs,
 which is summarized in~\figref{fig:framework}.
We start by formulating the automata $\M_\Pone$, $\M_\Ptwo$ and $\M_\Psto$,
 representing \PL{1}, \PL{2} and \PLs  ~abstract behaviors, respectively.
Next, a FIFO memory stack  $(m_i)_{i=1}^{n} \in A_{\Ptwo}^n$ is implemented using
 two automata $\M_{\Mrd}$ and $\M_{\Mwr}$ to perform reading and writing operations, respectively.\footnote{Specific implementation details are described in~\secref{sec:casestudy}.}
The DAG $\DAG$ is constructed by following \algref{alg:dag_construction}.
The game starts with \PL2 moves until she executes a revealing attempt $\Reveal$, allowing \PL1 to play her delayed actions.
Once an end criterion is met, the game terminates,
resembling conditions such as `running out of fuel' or `reaching map boundaries'.

\begin{figure}[tb]
	\centering
	\includegraphics[width=0.95\columnwidth]{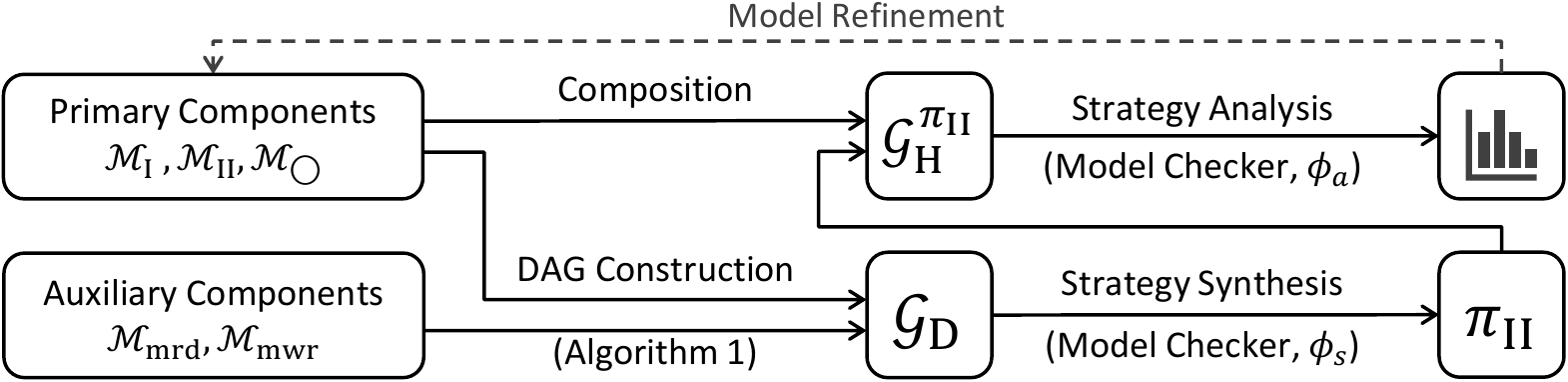}
	\caption{Synthesis and analysis framework based on the use of DAGs.} 
	\label{fig:framework}
\end{figure}

\begin{algorithm}[!t]
\caption{Procedure for DAG construction} \label{alg:dag_construction}
\DontPrintSemicolon
\SetKwRepeat{Do}{do}{while}
	\KwIn{Components $\M_{\Pone}, \M_{\Ptwo}, \M_{\Psto}, \M_{\Mwr}, \M_{\Mrd} $; initial state $\hat{s}_0$}
	\KwResult{DAG $\DAG$}
	\While{$\lnot(\mbox{end criterion})$}{
		\While(\CC{\PL{2} plays until a revealing attempt}){$a \neq \Reveal $}
		{
				$\M_{\Ptwo}.v_\B \leftarrow \Eff{a,v_\B} $,
				$\M_{\Mwr}.\mathit{write}(a,\texttt{++}\mathit{wr})$
		}
		\While(\CC{\PL{1} plays all delayed actions}){$\mathit{rd} \leqslant \mathit{wr}$}
		{
			$\M_{\Mrd}.\mathit{read}(a,\texttt{++}\mathit{rd})$,
			$\M_{\Pone}.v_\T \leftarrow \Eff{\beta(a),v_\T} $ \;
		}
		\uIf(\CC{\PLs~plays successful attempt}){draw $x \sim Brn(p(v_\T, v_\B))$}{
			$\M_{\Ptwo}.v_\B \leftarrow \M_{\Pone}.v_\T$,
			$ \mathit{wr} \leftarrow 0,\, \mathit{rd} \leftarrow 0$ \;
		}
		\lElse
		{ $ \mathit{rd} \leftarrow 0 $
		\CC{Unsuccessful attempt, forget \PL{1} actions}%
		}%
	}
\end{algorithm}

\algref{alg:strategy_synthesis} describes the procedure for strategy synthesis based on the DAG~$\DAG$,
 and an rPATL~\cite{chen2013automatic} synthesis query $\phi_\Syn$ that captures, for example, a safety requirement.
Starting with the initial location, the procedure checks whether $\query_\Syn$ is satisfied if action $\Reveal$ is performed at stage~$h$, and updates
 the set of feasible strategies $\Pi_{i}$ for subgame $\SUB_i$ until $h_{\max}$ is reached or $\phi_\Syn $ is not satisfied.%
\footnote{Failing to find a strategy at stage $i$ implies the same for all horizons of size $j>i$.}
Next, the set $\Pi_{i}$ is used to update the list of reachable end locations $\ell$ with new initial locations of reachable subgames that should be explored.
Finally, the composition of both $\HIG$ and $\Pi_\Ptwo^*$  resolves \PL{2} nondeterminism, where the resulting
model $\HIG^{\Pi_\Ptwo^*} $ is a Markov Decision Process (MDP) of complete information that can be easily used for further analysis.

\begin{algorithm}[tb]
\caption{Procedure for strategy synthesis} \label{alg:strategy_synthesis}
\DontPrintSemicolon
\SetKwRepeat{Do}{do}{while}
\KwIn{Initial location $(x_0,y_0)$, synthesis query $\query_\Syn$}
\KwOut{\PL{2} strategies $\Pi^*_{\Ptwo} $}
 $\ell \leftarrow \left[ (x_0,y_0) \right],\, i\leftarrow 0 $ \;
 \While(\CC{Explore all reachable subgames}){$i < | \ell | $}
 {
   $\hat{s}_{0} \leftarrow \left(\ell [i], \ell [i], \epsilon, 0, \Ptwo \right)$, $h \leftarrow 1, \Stop \leftarrow \bot $
   		\DD{Construct initial state}
   \While(\CC{Explore subgame till upper horizon}){$ h \leqslant h_{\max} \;\wedge\; \lnot\Stop$}{
     $\left(\pi_{\Ptwo},\varphi\right) \leftarrow 
     	\mathsf{Synth} \! \left(\SUB^{\pi_h}_{\hat{s}_0} ,\, \query_\Syn \right)$
     	\DD{Synthesize strategy for horizon h}
     \uIf(){$\pi_{\Ptwo} \neq \emptyset$}{
       $\Pi_{i} \leftarrow \Pi_{i} \cup \left(\pi_{\Ptwo},\pi_{h},\varphi \right)$, $h\texttt{++}$ \DD{Save synthesized strategy}
       }
     \lElse{
       $\Stop \leftarrow \top $
       }%
     }%
   $\mathsf{Prune} \left(\Pi_{t}\right) $, $\Pi^*_{\Ptwo} \leftarrow \Pi^*_{\Ptwo} \cup \Pi_t$
   		\DD{Prune subgame strategies}
   $\ell \leftarrow \ell \cdot \left(\mathsf{Reachable}\left(\Pi_{t}\right) \setminus \ell \right)$, $i\texttt{++}$
   	\DD{update reachability} 
   }
\end{algorithm}

\section{Case Study}
\label{sec:casestudy}

In this section, we consider
a case study where
a human operator supervises a UAV prone to stealthy attacks on its
GPS sensor.
The UAV mission is to visit a number of targets 
 after being airborne from a known base (initial state), while avoiding hazard zones that are known a priori.
Moreover, the presence of adversarial stealthy attacks via GPS spoofing is assumed.
We use the DAG framework to synthesize strategies for both the UAV and 
 an operator advisory system (AS) that schedules geolocation tasks for the operator.

\subsubsection{Modeling.}

We model the system as a delayed-action game $\DAG$, where \PL{1} and \PL{2} represent the adversary
 and the UAV-AS coalition, respectively.
\figref{fig:system_components} shows the model primary and auxiliary components.
In the UAV model $\M_{\Uav}$, $x_\B \Eq (\V{x}_\B, \V{y}_\B)$ encodes the UAV
 belief, and
 $A_{\Uav} \Eq \{ \mathsf{N}, \mathsf{S}, \mathsf{E}, \mathsf{W},
  \mathsf{NE}, \allowbreak \mathsf{NW}, \mathsf{SE}, \mathsf{SW} \}$
 is the set of available movements.
The AS can trigger the action $\Activate$ to initiate a geolocation task, attempting to confirm the current location.
The adversary behavior is abstracted by $\M_{\Adv}$
 where $x_\T \Eq (\V{x}_\T, \V{y}_\T)$ encodes the UAV true location.
The adversarial actions are limited to
 one directional increment at most.\footnote{To detect aggressive attacks, techniques from literature
 			(\eg \cite{pajic_tcns17,pajic_csm17,jovanov_tac19}) can be used.}
If, for example, the UAV is heading $\mathsf{N}$, then the adversary set of actions
 is
 $\beta(\mathsf{N}) \Eq \{\mathsf{N},\mathsf{NE}, \mathsf{NW}\} $.
The auxiliary components
 $\M_{\Mwr}$ and $\M_{\Mrd}$ manage a FIFO memory stack
 {\larger[-1]$({m}_i)_{i=0}^{n-1} \in A_{\Uav}^n$}.
The last UAV movement is saved in ${m}_i$ by synchronizing $\M_{\Mwr}$ with  $\M_\Uav $ via $\Write$,
while $\M_{\Mrd}$ synchronizes with $\M_\Adv$ via $\Read$
 to read the next UAV action from ${m}_j$.
The subgame terminates whenever action $\Write$ is attempted and $\M_{\Mwr}$ is at state $n$ (\ie out~of~memory).

\begin{figure}[!t] 
	\centering
	\includegraphics[scale=0.67]{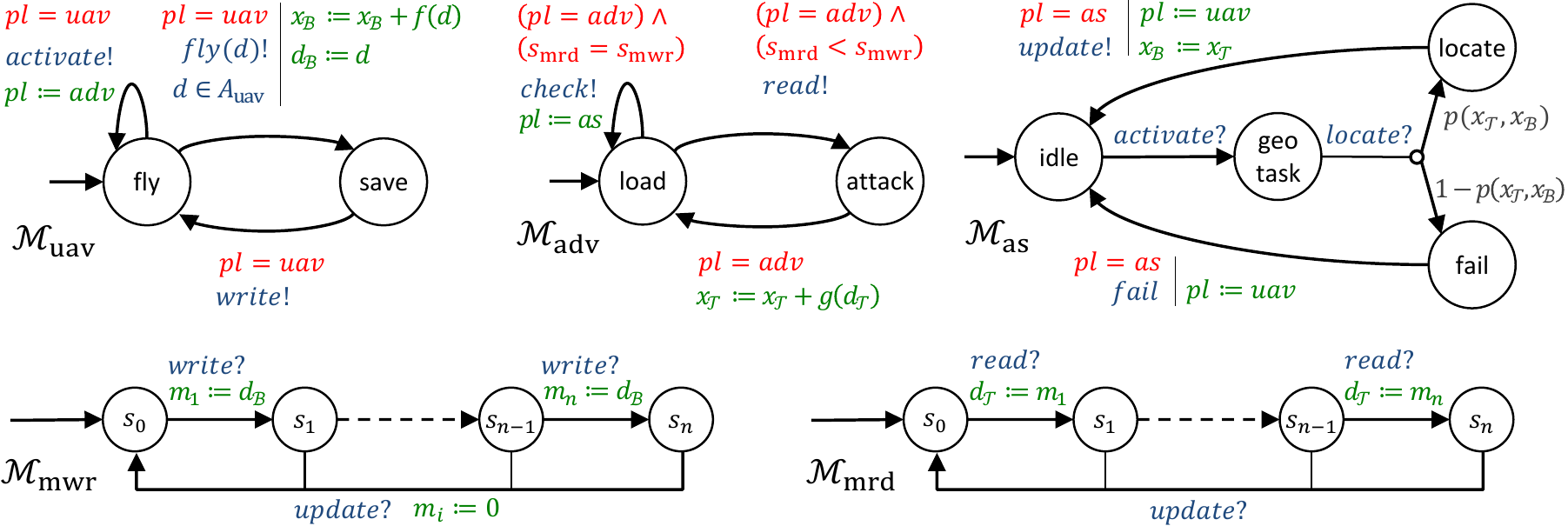}
	\caption{Primary DAG components: UAV ($\M_{\Uav}$),  adversary ($\M_{\Adv}$), and AS~($\M_{\Oas}$).
	Auxiliary DAG components:  memory write ($\M_{\Mwr}$) and memory read ($\M_{\Mrd}$) models,
	capturing the DAG representation. At  stage ${i}$, the next memory location to write/read~is~${m}_i$.} 
	\label{fig:system_components}
\end{figure}

The goal is to find strategies for the UAV-AS coalition
 based on the following:
\begin{itemize}
\item
\emph{Target reachability.}
To overcome cases where targets are unreachable due to hazard zones,
 the label $\Reach$ is assigned to the set of states with acceptable checkpoint locations (including the target) to render the objective incrementally feasible.
The objective for all encountered subgames is then formalized as
 $ \Probability_{\max} \left[ \eventually \; \Reach \right] \geqslant p_{\min} $ for some bound $p_{\min}$.
\item
\emph{Hazard Avoidance.}
Similar to target reachability, the label $\Hazard$ is assigned to states corresponding to hazard zones.
 The objective
 $\Probability_{\max} \left[ \always \; \lnot\Hazard \right] \geqslant p_{\min}$ is then specified for all encountered subgames.
\end{itemize}
By refining the aforementioned objectives, synthesis queries are used for both the subgames and the supergame. 
Specifically,~the~query
\begin{align}
\label{eqn:fi_sin}
 	\query_\Syn(k) \!\coloneqq\! \llangle {\Uav} \rrangle \Probability_{\max=?} 
	\left[ \lnot\Hazard  \,\until^{\leqslant k}\,  ( \Locate \wedge \Reach )  \right]
\end{align}
 is specified for each encountered subgame $\SUB_i$, where $\Locate$ indicates a successful geolocation task.
By following \algref{alg:strategy_synthesis} for a $q$ number of reachable subgames, the supergame is reduced to an MDP $\DAG^{ \{ \pi_i \}_{i=1}^{q}}$
(whose states are the reachable subgames), which is checked against the query
\begin{equation}
\label{eqn:fi_ana}
 	\query_\Ana (n) \!\coloneqq\! \llangle {\Adv} \rrangle \Probability_{\min,\max =?} 
	\left[ \eventually^{\leqslant n}\,  \Target  \right]
\end{equation}
 to find the bounds on the probability that the target is reached under a maximum number of geolocation tasks $n$.

\subsubsection{Experimental Results.}

\figref{fig:cs:setting} shows the map setting used for implementation.
The UAV's ability to actively detect an attack depends on both its belief and
 the ground truth.
Specifically,
the probability of success in a geolocation task mainly relies on the disparity between the belief and true locations,
 captured by 
 $\mathit{f}_{\mathrm{dis}} \colon \Eval{x_\B} \times \Eval{x_\T} \rightarrow [0,1]$,
obtained by assigning probabilities for each pair of locations
 		according to their features (\eg landmarks) and smoothed using a Gaussian 2D filter.
  		A thorough experimental analysis where probabilities are extracted from experiments with human operators is described in~\cite{elfar2019security}.
The set of hazard zones include the map boundaries to prevent the UAV from reaching boundary values.
Also, the adversary is prohibited from launching attacks for at least the first step,
a practical assumption to prevent the UAV model from infinitely bouncing around the target location.

\begin{figure}[!t]
	\centering
	\subfigure[Environment setup.]{%
		\label{fig:cs:setting}\includegraphics[height=1.6in]{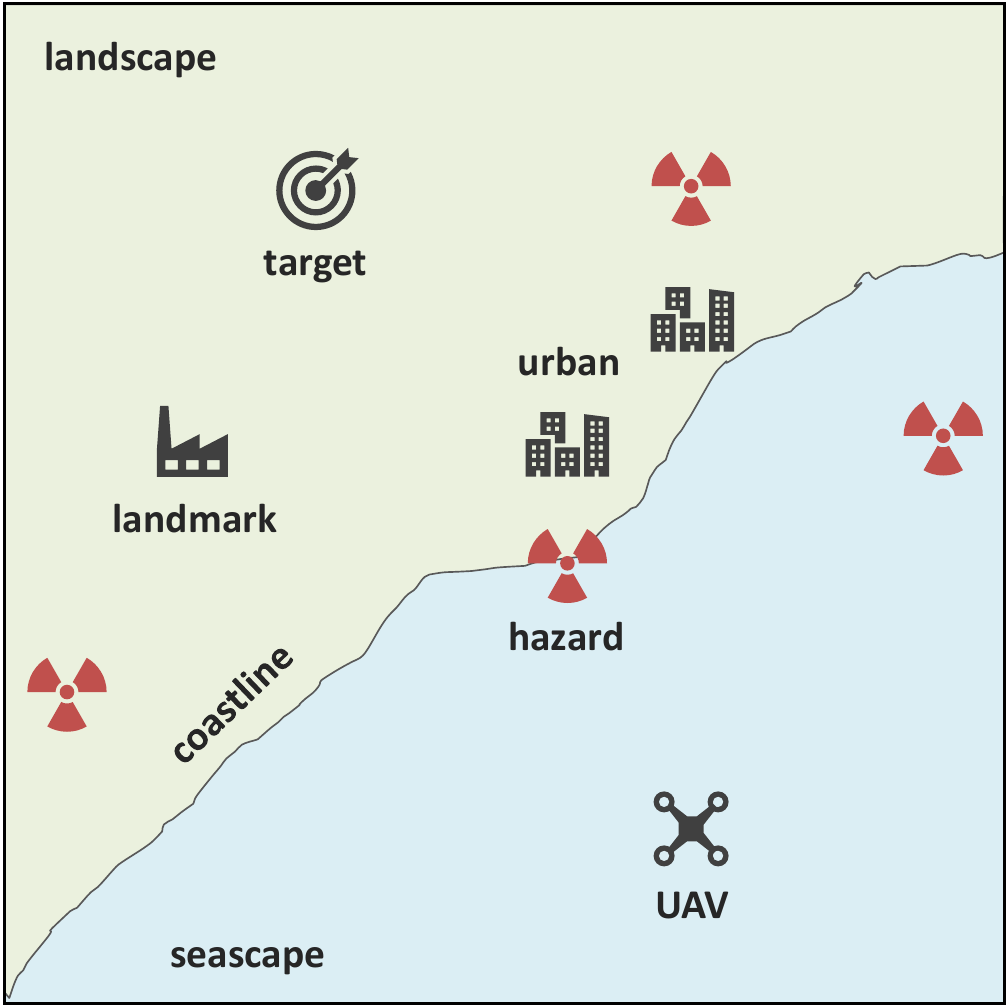}}%
	\hfill
	\subfigure[Supergame $\DAG$.]{%
		\label{fig:cs:supergame}\includegraphics[height=1.6in]{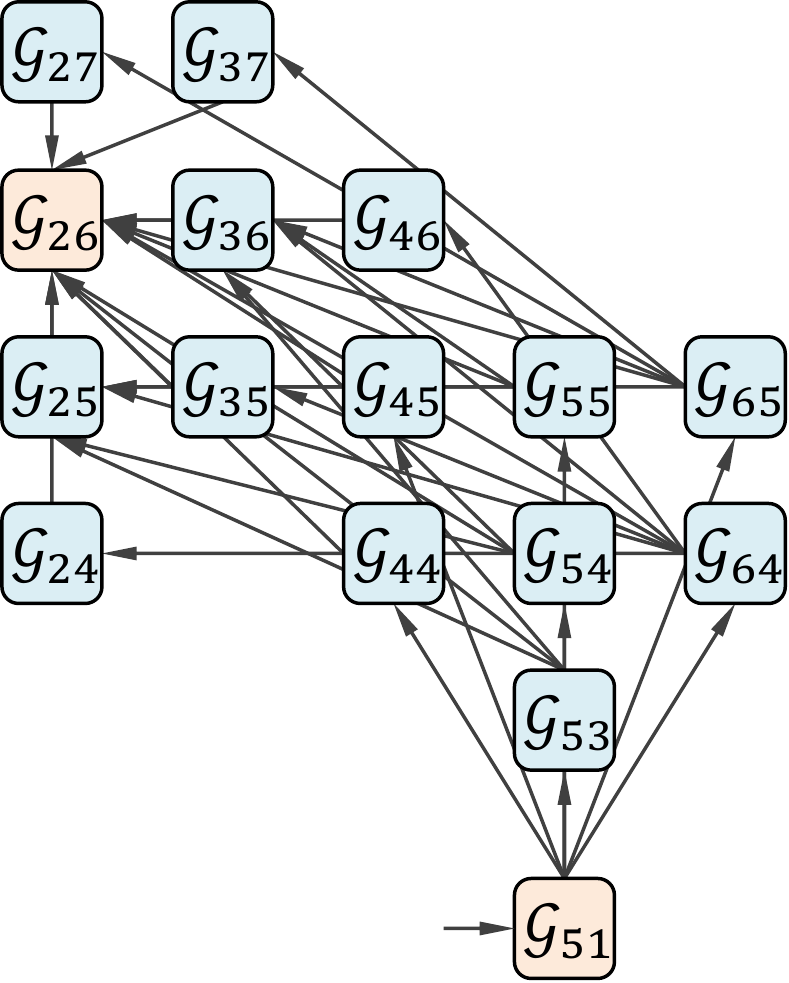}}%
	\hfill
	\subfigure[Protocols.]{%
		\label{fig:cs:protocols}\includegraphics[height=1.6in]{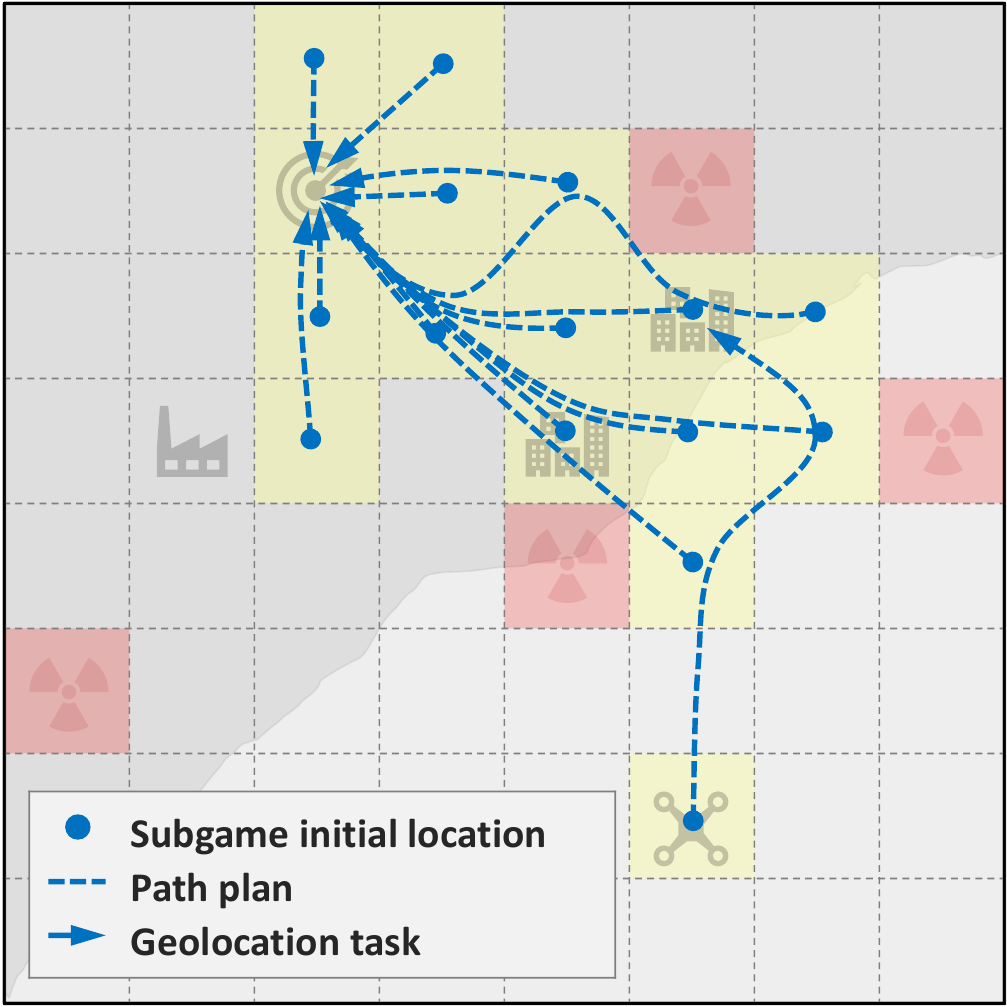}}%
	\caption{%
		{%
		(a) The environment setup used for the case study;
		(b) the induced supergame MDP, where the subgames form its states; and
		(c) the synthesized protocols.
		}%
		}
	\label{fig:cs}
\end{figure}

We implemented the model in \PRISM{}~\cite{chen2013prism,kwiatkowska2018prism}
and performed the experiments on an Intel Core i7 4.0 GHz CPU, with 10GB RAM dedicated to~the~tool.
\figref{fig:cs:supergame} shows the supergame obtained by following the procedure in \algref{alg:strategy_synthesis}.
A vertex $\SUB_{\V{xy}}$ represents a subgame (composed with its strategy) that starts at location $(\V{x},\V{y})$, while the outgoing
edges points to subgames reachable from the current one.
Note that each edge represents a probabilistic transition. Subgames with more than one outgoing transition imply nondeterminism
 that is resolved by the adversary actions. Hence, the directed graph depicts an MDP.

The synthesized strategy for $(h_\Adv\!\!=\!\!2,h\!\!=\!\!4)$ is demonstrated in \figref{fig:cs:protocols}.
For the initial subgame, \figref{fig:results:subgame} shows the maximum probability of a successful geolocation task if performed at stage $h$,
 and the remaining distance to target.
Assuming the adversary can launch attacks after stage
 $h_\Adv \!\!=\!\!2$, the detection probability is maximized by performing
 the geolocation task at step 4,
 and hazard areas can still be avoided up till $h\!\!=\!\!6$.
For $h_\Adv \!\!=\!\!1$, however, 
$h\!\!=\!\!3$ has the highest probability of success, which
 diminishes at $h\!\!=\!\!6$ as
 no possible flight plan exists without encountering a hazard zone.
The effect of the maximum number of geolocation tasks $(n)$
 on target reachability is studied by analyzing the supergame against $\query_\Ana$ as shown in \figref{fig:results:supergame}.
The minimum number of geolocation tasks to guarantee a non-zero 
 probability of reaching the target (regardless of the adversary strategy) is 3
 with probability bounds of $(33.7\%,94.4\%)$.

\newlength{\mywidth}
\newlength{\myheight}
\setlength{\mywidth}{2.2in}
\setlength{\myheight}{1.6in}
\begin{figure}[!t]
	\centering
	\subfigure{%
		\scalebox{0.8}{
%
%
\definecolor{cBlue}{rgb}{0.00000,0.44700,0.74100}%
\definecolor{cRed}{rgb}{0.85000,0.32500,0.09800}%
\definecolor{mycolor3}{rgb}{0.92900,0.69400,0.12500}%
\definecolor{mycolor4}{rgb}{0.49400,0.18400,0.55600}%
\definecolor{mycolor5}{rgb}{0.46600,0.67400,0.18800}%
\begin{tikzpicture}
\pgfplotsset{width=6.5cm,compat=1.3}
\begin{axis}[%
width=\mywidth,
height=\myheight,
at={(0.8in,0.44in)},
scale only axis,
xmin=0,
xmax=7,
xtick={0, 1, 2, 3, 4, 5, 6, 7},
xlabel style={font=\normalsize}, 
xlabel={(a) Geolocation task at stage $h$},
ymin=0,
ymax=10,
ytick={0,2,4,6,8,10},
ylabel style={font=\normalsize , align=center},
ylabel={Distance to target},
axis background/.style={fill=white},
axis x line*=bottom,
axis y line*=right,
xmajorgrids,
ymajorgrids,
legend style={at={(0.97,0.97)}, anchor=north east, legend cell align=left, align=left, draw=white!15!black,font=\small}]

\addplot [smooth, color=black!40, solid, line width=2.0pt, mark size=1.6pt, mark=none, mark options={solid, fill=mycolor4, mycolor4}]
  table[row sep=crcr]{%
0 5.800\\
1 4.472\\
2 4.123\\
3 3.606\\
4 4.472\\
5 5\\
6 7.2\\
};
\addlegendentry{$h_{\Adv} \Eq 2$}

\end{axis}

\begin{axis}[%
width=\mywidth,
height=\myheight,
at={(0.8in,0.44in)},
scale only axis,
xmin=0,
xmax=7,
xmajorticks=false,
ymin=0,
ymax=1,
ytick={0,0.2,0.4,0.6,0.8,1},
yticklabel style={/pgf/number format/fixed,
            /pgf/number format/precision=1,
            /pgf/number format/fixed zerofill},
y label style={font=\normalsize, align=center},
ylabel={Prob. of success $\query_\Syn$},
axis background/.style={fill=none},
axis x line*=bottom,
axis y line*=left,
legend style={at={(0.03,0.03)}, anchor=south west, legend cell align=left, align=left, draw=white!15!black,font=\small}]

\addplot [smooth, color=cRed, densely dotted, line width=0.5pt, mark size=1.6pt, mark=o, mark options={solid, fill=cRed, cRed}]
  table[row sep=crcr]{%
0 1.000\\
1 0.515\\
2 0.537\\
3 0.601\\
4 0.734\\
5 0.647\\
6 0.537\\
7 0.502\\
};
\addlegendentry{$h_{\Adv} \Eq 2$}

\addplot [smooth, color=cBlue, dashed, line width=0.5pt, mark size=1.6pt, mark=square, mark options={solid, fill=cBlue, cBlue}]
  table[row sep=crcr]{%
0 1.000\\ 1 0.515\\ 2 0.537\\ 3 0.601\\ 4 0.527\\ 5 0.502\\ 6 0.0\\ 7 0.0\\
};
\addlegendentry{$h_{\Adv} \Eq 1$}

\end{axis}

\end{tikzpicture}
		\label{fig:results:subgame}}\hfill
	\subfigure{%
		\scalebox{0.8}{
%
%
\definecolor{cBlue}{rgb}{0.00000,0.44700,0.74100}%
\definecolor{cRed}{rgb}{0.85000,0.32500,0.09800}%
\definecolor{mycolor3}{rgb}{0.92900,0.69400,0.12500}%
\definecolor{mycolor4}{rgb}{0.49400,0.18400,0.55600}%
\definecolor{mycolor5}{rgb}{0.46600,0.67400,0.18800}%
\begin{tikzpicture}
\pgfplotsset{width=7cm,compat=1.3}
\begin{axis}[%
width=\mywidth,
height=\myheight,
at={(0.8in,0.44in)},
scale only axis,
xmin=0,
xmax=10,
xtick={0, 1, 2, 3, 4, 5, 6, 7, 8, 9, 10},
xlabel style={font=\normalsize}, 
xlabel={(b) Max. no. of geolocation tasks $n$},
ymin=0,
ymax=1,
ytick={0.0,0.2,0.4,0.6,0.8,1.0},
yticklabel style={/pgf/number format/fixed,
            /pgf/number format/precision=1,
            /pgf/number format/fixed zerofill},
ylabel style={font=\normalsize},
ylabel={Reachability bounds $\query_\Ana$},
axis background/.style={fill=white},
axis x line*=bottom,
axis y line*=left,
xmajorgrids,
ymajorgrids,
legend style={at={(0.97,0.16)}, anchor=south east, legend cell align=left, align=left, draw=white!15!black,font=\small},%
axis background style={fill=white}%
]


\addplot [name path=B, smooth, color=cBlue, solid, line width=0.5pt, mark size=1.8pt, mark=o,%
		  mark options={solid, fill=cBlue, cBlue}]
  table[row sep=crcr]{%
0.0   0.0                  \\
1.0   0.0                  \\
1.0   0.0                  \\
2.0   0.73899              \\
3.0   0.9436902300000001   \\
4.0   0.9885757436100001   \\
5.0   0.9977358742022701   \\
6.0   0.999555511013705    \\
7.0   0.999913079922302    \\
8.0   0.9999830305173686   \\
9.0   0.9999966893053536   \\
10.0  0.9999993542796102   \\
};
\addlegendentry{$\query_{\Ana,\max}$}


\addplot [name path=A, smooth, color=cRed, solid, line width=0.5pt, mark size=2.0pt, mark=triangle,%
 		  mark options={solid, fill=cRed, cRed, line width=0.5}]
  table[row sep=crcr]{%
0.0   0.0                  \\
1.0   0.0                 \\
2.0   0.0                 \\
2.0   0.0                 \\
3.0   0.33738677          \\
4.0   0.6376609953        \\
5.0   0.82068015994097    \\
6.0   0.9161146671467636  \\
7.0   0.9620444284425603  \\
8.0   0.983170325597471   \\
9.0   0.9926301449895001  \\
10.0  0.9967975760207504  \\
};
\addlegendentry{$\query_{\Ana,\min}$}


\addplot[fill=gray,fill opacity=0.3, forget plot] fill between[of=A and B];


\addplot [name path=C, smooth, color=cBlue, densely dashed, line width=0.5pt, mark size=1.0pt, mark=*,%
		  mark options={solid, fill=cBlue, cBlue, line width=0.5}]
  table[row sep=crcr]{%
2.0    0.73899                 \\
3.0    0.20470023000000004     \\
4.0    0.04488551361000004     \\
5.0    0.009160130592269944    \\
6.0    0.0018196368114349326   \\
7.0    3.5756890859706125E-4   \\
8.0    6.995059506653689E-5    \\
9.0    1.3658787984982546E-5   \\
10.0   2.6649742566053902E-6   \\
};
\addlegendentry{$\Updelta \query_{\Ana,\min}$}


\addplot [name path=C, smooth, color=cRed, densely dashed, line width=0.5pt, mark size=1.8pt, mark=x,%
		  mark options={solid, fill=cRed, cRed, line width=0.5}]
  table[row sep=crcr]{%
3.0    0.33738677             \\
4.0    0.3002742253           \\
5.0    0.18301916464097       \\
6.0    0.0954345072057936     \\
7.0    0.045929761295796734   \\
8.0    0.021125897154910644   \\
9.0    0.009459819392029112   \\
10.0   0.004167431031250279   \\
11.0   0.001817572087053465   \\
};
\addlegendentry{$\Updelta \query_{\Ana,\max}$}

\end{axis}

\end{tikzpicture}
		\label{fig:results:supergame}}
	\caption{
		Analysis results for (a) subgame $\SUB_{51}$ and (b) supergame $\DAG$.}
	\label{fig:results}
\end{figure}

The experimental data obtained for this case study are listed
 in~\tableref{tab:experimental_results}.
For the same grid size, more complex maps require more time
 for  synthesis  while the state space size remains unaffected.
The state space grows exponentially with the explored horizon size,
 \ie $ \mathcal{O} \left( (|A_{\Uav}||A_{\Adv}| )^{h} \right) $, 
 and is typically slowed by, \eg the presence of hazard areas, since the branches
 of the game transitions are trimmed upon encountering such areas.
Interestingly, for $h\!\!=\!\!6$ and $h\!\!=\!\!7$, while the model construction time (size) for $h_\Adv \!\!=\!\!1$
 is almost twice (quadruple) as those for $h_\Adv \!\!=\!\!2$,
the time for checking $\query_\Syn$ declines in comparison. This reflects the fact that, in case of $h_\Adv \!\!=\!\!1$ compared to $h_\Adv \!\!=\!\!2$,
the UAV has higher chances to reach a hazard zone for the same $k$, leading to a shorter time for model checking.

\begin{table}[!tb]
\caption{
	Results for strategy synthesis using queries $\query_\Syn$ and $\query_\Ana$.}
\label{tab:experimental_results}
\setlength{\tabcolsep}{10pt}%
\newcommand{\None}{\multicolumn{1}{c}{--}}%
\resizebox{\columnwidth}{!}{%
\begin{tabular}{lll|rrr|rrr}
	\hline	
	\multicolumn{3}{c}{Subgame {\larger[-2]$\SUB_{51}$}} &
	\multicolumn{3}{|c}{Model Size} &
	\multicolumn{3}{|c}{Time (sec)} \\
	\hline
	Map & $t_{adv}$ & $k$ &
	\multicolumn{1}{|c}{States} &
	\multicolumn{1}{c}{Transitions} &
	\multicolumn{1}{c}{Choices} &
	\multicolumn{1}{|c}{Model} &
	\multicolumn{1}{c}{$\query_\Syn$} &
	\multicolumn{1}{c}{$\query_\Ana$} \\
	\hline
	$8 \times 8$ 	
		& 1		
		 & 4 & \num{11608} & \num{17397} & \num{15950} &  	  {2.810} & {0.072} & \None \\
		&& 5 & \num{57129} & \num{87865} & \num{83267} &  	  {14.729} & {0.602} & \None \\
		&& 6 & \num{236714} & \num{366749} & \num{359234} &   {62.582} & {1.293} & \None \\
		&& 7 & \num{876550} & \num{1365478} & \num{1355932} & {231.741} & {6.021} & \None \\
	\cline{2-9}
		& 2
		 & 4 & \num{6678} & \num{9230} & \num{8394} &  	      {2.381} & {0.042} & \None \\
		&& 5 & \num{33904} & \num{48545} & \num{45354} &  	  {10.251} & {0.367} & \None \\
		&& 6 & \num{141622} & \num{204551} & \num{198640} &   {37.192} & {1.839} & \None \\
		&& 7 & \num{524942} & \num{763144} & \num{754984} &   {145.407} & {8.850} & \None \\
	\hline
	 \multicolumn{3}{c|}{Supergame {\larger[-2]$\DAG$}}
	 	& \num{6212} & \num{8306} & \num{6660} & {2.216} & \None & {2.490} \\
	\hline
\end{tabular}%
}%
\end{table}

\section{Discussion and Conclusion}
\label{sec:conclusion}

In this paper, we introduced DAGs and showed how they can simulate HIGs by delaying players' actions.
We also derived a DAG-based framework for strategy synthesis and analysis using off-the-shelf SMG model checkers.
Under some practical assumptions, we showed that DAGs can be decomposed into independent subgames,
utilizing parallel computation to reduce the time needed for model analysis, as well as the size of the state space.
We further demonstrated the applicability of the proposed framework on a case study focused on synthesis and analysis of active attack detection strategies for UAVs prone to cyber attacks. 

DAGs come at the cost of increasing the total state space size as $\M_\Mrd$ and $\M_\Mwr$ are introduced.
This does not present a significant limitation due to the compositional approach towards strategy synthesis using subgames. 
However, the synthesis is still limited to model sizes that off-the-shelf tools can handle.

The concept of delaying actions implicitly assumes that the adversary knows the UAV actions a priori.
This does not present a concern in the presented case study as an abstract (\ie nondeterministic) adversary model is analogous to synthesizing against the worst-case attacking scenario.
Nevertheless,
strategies synthesized using DAGs (and SMGs in general) are inherently conservative.
Depending on the considered system, this can easily lead to no feasible solution.

The proposed synthesis framework ensures preservation of safety properties. Yet, general
reward-based strategy synthesis is to be approached with care.
For example, rewards dependent on the belief can appear in any state, and exploring hypothetical branches is not required.
However, rewards dependent on a state's true value should only appear in proper states, and all hypothetical branches are to be explored.
A detailed investigation of how various properties are preserved by DAGs, along with multi-objective synthesis, is a direction for future~work.

\label{page:before_biblio}
\bibliographystyle{splncs04}
\bibliography{Biblio}

\end{document}